\def\fullversion{1}
\def\draft{0}
  \newcommand{\apref}[1]{Appendix~\ref{#1}}
  \newcommand{\apref}[1]{the full version of the paper}
\newcommand{\Proba}[1]{\mathbb{P}\left[#1\right]}
\newcommand{\QUERY}{\ensuremath{\mathrm{QUERY}}}
\spnewtheorem*{remark*}{Remark}{\bfseries}{\itshape}
\let\doendproof\endproof
\renewcommand\endproof{~\hfill\qed\doendproof}
\title{Graph Reconstruction via Distance Oracles
\texorpdfstring{\ifnum\fullversion=0
    \protect\footnote{Full version available at {\tt http://arxiv.org/??}}
\fi}{}
}
\author{Claire Mathieu\thanks{CNRS. Research supported in part by the French ANR Blanc program under contract ANR-12-BS02-005 (RDAM project) and by the NSF medium AF grant 0964037.}
\and Hang Zhou\thanks{Research supported in part by the French ANR Blanc program under contract ANR-12-BS02-005 (RDAM project).}}
\institute{D\'epartement d'Informatique UMR CNRS 8548, \\
 \'Ecole Normale Sup\'erieure,  Paris, France\\
\email{\tt \{cmathieu,hangzhou\}@di.ens.fr}}
\date{}
\begin{document}
\maketitle
\begin{abstract}
We study the problem of reconstructing a hidden graph given access to a distance oracle.
We design randomized algorithms for the following problems:
reconstruction of a degree bounded graph with query complexity $\tilde{O}(n^{3/2})$;
reconstruction of a degree bounded outerplanar graph with query complexity $\tilde{O}(n)$;
and near-optimal approximate reconstruction of a general graph.
\end{abstract}

\section{Introduction}
Decentralized networks (such as the Internet or sensor networks) raise algorithmic problems different from static, centrally planned networks.
A challenge is  the lack of accurate maps  for the  topology of these networks, due to their dynamical structure and to the lack of centralized control.
How can we achieve an accurate picture of the topology with minimal overhead?
This problem has recently received attention  (see e.g.,~\cite{Beerliova,Chen:2010,dall2006exploring,erlebach2006network}).

For Internet networks, the topology can be investigated at the router and autonomous system (AS) level, where the set of routers (ASs) and their physical connections (peering relations) are the vertices and edges of a graph, respectively. Traditionally, inference of routing topology  has relied on tools such as traceroute and mtrace  to generate path information. However, these tools require cooperation of intermediate nodes or routers to generate messages. Increasingly, routers block traceroute requests due to privacy and security concerns, so inference of topology increasingly relies on delay information rather than on the route itself.  At this level of generality, many problems are provably intractable~\cite{Anandkumar}, thus suggesting the need to study related but simpler questions. In this paper, for simplicity we assume that we have access to every
vertex in the graph, and only the edges are unknown.

\paragraph{The problem.}
Consider the shortest path metric   $\delta(\cdot ,\cdot )$ of a connected, unweighted graph $G=(V,E)$, where $|V|=n$.
In our computational model, we are given the vertex set $V$, and we have access to $\delta$ via a {\em query oracle} \textsc{Query}$(\cdot,\cdot)$ which, upon receiving a query $(u,v)\in V^2$, returns $\delta(u,v)$.  The {\em metric reconstruction problem} is to find the metric $\delta$ on $V$.
The efficiency of a reconstruction algorithm is measured by its \emph{query complexity}, i.e., the number of queries to the oracle. (We focus on query complexity, but our algorithms can also easily be implemented in polynomial time and space).

Note that finding $\delta$ is equivalent to finding every edge in $E$, thus this problem is also called the \emph{graph reconstruction problem}.

\paragraph{Related work.}

Reyzin and Srivastava~\cite{Reyzin2007} showed an $\Omega(n^2)$ lower bound for the graph reconstruction problem on general graphs.
We extend their result to get a lower bound for the graph approximate reconstruction problem.

To reconstruct graphs of bounded degree, we apply some algorithmic ideas previously developed for compact routing~\cite{Thorup:2001:CRS:378580.378581} and ideas for Voronoi cells~\cite{honiden:2009:bgv:1681511.1682426}.

A closely related model in network discovery and verification provides queries which, upon receiving a node $q$, returns the distances from $q$ to all other nodes in the graph~\cite{erlebach2006network}, instead of the distance between a pair of nodes in our model.
The problem of minimizing the number of queries is {\bf NP}-hard and admits an $O(\log n)$-approximation algorithm (see~\cite{erlebach2006network}).
In another model, a query at a node $q$ returns all edges on all shortest paths from $q$ to any other node~\cite{Beerliova}.
Network tomography also proposes statistical models~\cite{Castro04networktomography,Tarissan:2009:EMC:1719850.1719893}.

\paragraph{Our results.}

In Section~\ref{sec:degree bounded}, we consider the reconstruction problem on graphs of bounded degree.
We provide a randomized algorithm to reconstruct such a graph with query complexity $\tilde{O}(n^{3/2})$.
Our algorithm selects a set of nodes (called \emph{centers}) of expected size $\tilde{O}(\sqrt{n})$, so that they separate the graph into $\tilde{O}(\sqrt{n})$ slightly overlapped subgraphs, each of size $O(\sqrt{n})$.
We show that the graph reconstruction problem is reduced to reconstructing every subgraph, which can be done in $O(n)$ queries by exhaustive search inside this subgraph.

In Section~\ref{sec:degree bounded outerplanar}, we consider \emph{outerplanar} graphs of bounded degree.
An \emph{outerplanar} graph is a graph which can be embedded in the plane with all vertices on the exterior face.
Chartrand and Harary~\cite{Chartrand1967} first introduced outerplanar graphs and proved that a graph is outerplanar if and only if it contains no subgraph homeomorphic from $K_4$ or $K_{2,3}$.
Outerplanar graphs have received much attention in the literature because of their simplicity and numerous applications.
In this paper, we show how to reconstruct degree bounded outerplanar graphs with expected query complexity $\tilde{O}(n)$.
The idea is to find the node $x$ which appears most often among all shortest paths (between every pair of nodes), and then partition the graph into components with respect $x$.
We will show that such partition is \emph{$\beta$-balanced} for some constant $\beta<1$, i.e., each resulting component is at most $\beta$ fraction of the graph.
Such partitioning allows us to reconstruct the graph recursively with $O(\log n)$ levels of recursion.
However, it takes too many queries to compute all shortest paths in order to get $x$.
Instead, we consider an approximate version of $x$ by computing a sampling of shortest paths to get the node which is most often visited among all sampling shortest paths.
We will show that the node obtained in this way is able to provide a $\beta$-balanced partition with high probability.
Our algorithm for outerplanar graphs gives an $O(\Delta\cdot n\log^3 n)$ bound which, for a tree (a special case of an outerplanar graph), is only slightly worse than the optimal algorithm for trees with query complexity $O(\Delta \cdot n\log n)$ (see~\cite{hein1989optimal}).
On the other hand, the tree model typically restricts queries to pairs of tree leaves, but we allow queries of any pair of vertices, not just leaves.

In Section~\ref{sec:approx reconstruction}, we consider an approximate version of the metric reconstruction problem for general graphs. The metric $\widehat{\delta}$ is an $f$-approximation of the metric $\delta$ if for every pair of nodes $(u,v)$, $\widehat{\delta}(u,v)\leq \delta(u,v)\leq f\cdot \widehat{\delta}(u,v)$, where $f$ is any sublinear function of $n$.
We give a simple algorithm to compute an $f$-approximation of the metric with expected query complexity $O(n^2(\log n)/f)$.
We show that our algorithm is near-optimal by providing an $\Omega(n^2/f)$ query lower bound.

An open question is whether the $\tilde{O}(n^{3/2})$ bound in Theorem~\ref{thm:boundeddegree} is tight.

\paragraph{Other models.}

The problem of reconstructing an unknown graph by queries that reveal partial information has been studied extensively in many different contexts, independently stemming from a number of applications.

In evolutionary biology, the goal is to reconstruct evolutionary trees, thus the hidden graph has a tree structure.
One may query a pair of species and get in return the distance between them in the (unknown) tree~\cite{waterman1977additive}.
See for example~\cite{hein1989optimal,King,reyzin2007longest}.
In this paper, we assume that our graph is not necessarily a tree, but may have an arbitrary connected topology.

Another graph reconstruction problem is motivated by DNA shotgun sequencing and linkage discovery problem of artificial intelligence~\cite{bouvel2005combinatorial}. In this model we have access to an oracle which receives a subset of vertices and returns the number of edges whose endpoints are both in this subset.
This model has been much studied (e.g.,~\cite{angluin2004learning,Choi,grebinski2000optimal,Reyzin2007}) and an optimal algorithm has been found in~\cite{Mazzawi}. Our model is different since there is no counting.

Geometric reconstruction deals with, for example, reconstructing a curve from a sampling of points~\cite{amenta98thecrust,Dey00reconstructingcurves} or
reconstructing a road network from a given collection of path traces~\cite{Chen:2010}.
In contrast, our problem contains no geometry, so results are incomparable.

\section{Degree Bounded Graphs}
\label{sec:degree bounded}

\begin{theorem}\label{thm:boundeddegree}
Assume that the graph $G$ has bounded degree $\Delta$. Then we have a randomized algorithm for the metric reconstruction problem, with query complexity $O(\Delta^4\cdot n^{3/2}\cdot\log^2 n\cdot \log\log n)$, which is $\tilde{O}(n^{3/2})$ when $\Delta$ is constant.
\end{theorem}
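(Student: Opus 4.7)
The plan is to use a three-phase randomized algorithm whose structure mirrors the sketch in the introduction: first sample a small set of ``centers,'' next query all vertex-to-center distances, and finally do exhaustive local reconstruction inside each of $\tilde O(\sqrt n)$ overlapping pieces.

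In the sampling phase I would form $C\subseteq V$ by independently including each vertex with probability $p=\Theta(\log n/\sqrt n)$; a Chernoff bound then gives $|C|=\tilde O(\sqrt n)$ with high probability. In the distance-query phase I would ask the oracle for $\delta(c,v)$ at every pair $(c,v)\in C\times V$, for a total of $|C|\cdot n=\tilde O(n^{3/2})$ queries. In the local-reconstruction phase I would, for each $v\in V$, identify the nearest center $c_v$ and the distance $r_v=\delta(v,c_v)$ (both readable from the previous phase), group the vertices into Voronoi cells $V_c=\{v:c_v=c\}$, set $r^*(c)=\max_{v\in V_c}r_v$, and form the thickened cell $\widehat V_c=\{u:\delta(u,c)\le r^*(c)+1\}$. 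Then for each $v\in V_c$ I would query $\delta(v,u)$ for every $u\in\widehat V_c$. Correctness is immediate from the triangle inequality: for any edge $(u,v)$ with $v\in V_c$, $\delta(u,c)\le 1+r_v\le r^*(c)+1$, so $u\in\widehat V_c$ and the edge is discovered. The total cost of this phase is $\sum_c|V_c|\cdot|\widehat V_c|\le n\cdot\max_c|\widehat V_c|$.

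The main obstacle is to show $\max_c|\widehat V_c|=\tilde O(\sqrt n)$ with high probability, since this dominates Phase~3's cost. The key lemma, which I would prove via a Chernoff-plus-union-bound argument over the random sample, is that with high probability every $w\in V$ satisfies $|B(w,r_w-1)|=O(\sqrt n)$: otherwise one of the $\ge\sqrt n$ vertices in the ball would have been sampled into $C$, contradicting the definition of $r_w$ as the distance to the nearest center. Bounded degree $\Delta$ then inflates this into $|B(w,r_w+O(1))|\le\Delta^{O(1)}\cdot O(\sqrt n)$ since each extra BFS layer multiplies the ball size by at most $\Delta$. The most delicate step is transferring this ``ball around a witness vertex'' estimate to an estimate on $|\widehat V_c|$, which is a ``ball around the center'' itself; I expect this transfer to be the crux of the argument and to account for the extra $\Delta^{O(1)}$ blow-up. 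Combining Phase~2's $\tilde O(n^{3/2})$ queries with Phase~3's $n\cdot\tilde O(\Delta^{O(1)}\sqrt n)$ queries yields the claimed bound $O(\Delta^4\cdot n^{3/2}\cdot\log^2 n\cdot\log\log n)$.
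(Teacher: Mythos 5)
Your Phases 1--2 and the edge-coverage part of Phase 3 are fine, but the step you yourself flag as ``the crux'' --- transferring the bound $|B(w,r_w-1)|=O(\sqrt n)$ for an arbitrary witness $w$ to a bound on $|\widehat V_c|=|B(c,r^*(c)+1)|$ --- is not merely delicate: it is false, and the complexity bound collapses there. The two balls have comparable radii but centers at distance $r^*(c)$ from each other, and nothing controls $|B(c,\rho)|$ in terms of $|B(w,\rho)|$ for $w$ at distance $\rho$ from $c$. Concretely, take a degree-$3$ tree on $n/2$ vertices of depth $O(\log n)$ and attach to its leaves $\Theta(\sqrt n\log n)$ disjoint pendant paths, each of length $r=\Theta(\sqrt n/\log n)$. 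A given path avoids your sample with constant probability; in that case every vertex of the path has the same nearest center $c$, which sits inside the tree at distance roughly $r$ from the path's far endpoint, so $r^*(c)\geq r$, $V_c$ contains the whole path ($\Theta(\sqrt n/\log n)$ vertices), and $\widehat V_c=B(c,r^*(c)+1)$ contains the entire tree ($\Theta(n)$ vertices, since the tree has diameter $O(\log n)\ll r$). With high probability a constant fraction of the paths are bad, and summing gives $\sum_c|V_c|\cdot|\widehat V_c|=\Theta(n^2)$, so on such inputs your Phase~3 is no better than exhaustive search.

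The paper's proof takes a genuinely different route precisely to avoid this trap: it never bounds balls around centers. It works with the clusters $C_w^A=\{v:\delta(w,v)<\delta(A,v)\}$ of \emph{arbitrary} vertices $w$ (the cell $w$ would get if promoted to a center) and guarantees $|C_w^A|\leq 6n/s$ for \emph{every} $w\in V$, which it obtains via the iterated Thorup--Zwick sampling --- resampling $O(\log n)$ times from the vertices whose estimated cluster is still large --- rather than a one-shot sample. The local piece around a center $a$ is then $D_a=B_a\cup\bigcup_{b\in B_a}C_b^A$ with $B_a$ the radius-$2$ ball around $a$, so $|D_a|=O(\Delta^2\cdot n/s)$ by construction, and coverage of every edge follows from a shift argument: for an edge $(u,v)$ with $\delta(A,u)\leq\delta(A,v)$ and $a$ the nearest center to $u$, the vertex $b$ at distance $2$ from $a$ on a shortest $a$--$u$ path satisfies $\delta(b,u)<\delta(A,u)$ and $\delta(b,v)<\delta(A,v)$, so both endpoints lie in $C_b^A\subseteq D_a$. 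To repair your write-up you would need to replace the thickened Voronoi balls by such cluster unions (and adopt the iterative sampling, since the all-clusters-small guarantee is what the size bound rests on).
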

Our reconstruction proceeds in two phases.

In the first phase, we follow the notation from Thorup and Zwick~\cite{Thorup:2001:CRS:378580.378581}: Let $A\subset V$ be a subset of vertices called \emph{centers}. For $v\in V$, let $\delta(A,v)=\min\{\delta(u,v)\mid u\in A\}$ denote the distance from $v$ to the closest node in $A$. For every $w\in V$, let  the \emph{cluster} of $w$ with respect to the set $A$ be defined by $C_w^A=\{v\in V\mid \delta(w,v)<\delta(A,v)\}$. Thus for $w\notin A$, $C_w^A$ is the set of the vertices whose closest neighbor in $A\cup \{ w\}$ is $w$.
Algorithm \textsc{Modified-Center}$(V,s)$, which is randomized, takes as input the vertex set $V$ and a parameter $s\in [1,n]$, and returns a subset $A\subset V$ of vertices such that all clusters $C_w^A$ (for all $w\in V$) are of size at most $6n/s$.
$A$ has expected size at most $2s\log{n}$, thus the expected number of queries is  $O(s\cdot n\cdot \log^2 n \cdot \log \log n)$. This algorithm applies, in a different context, ideas from~\cite{Thorup:2001:CRS:378580.378581}, except that we use sampling to compute an estimate of $|C_w^A|$.

In the second phase, Algorithm \textsc{Local-Reconstruction}$(V,A)$ takes as input the vertex set $V$ and the set $A$ computed by \textsc{Modified-Center}$(V,s)$, and returns the edge set of $G$.
It partitions the graph into slightly overlapped components according to the centers in $A$, and proceeds by exhaustive search within each component.
Inspired by the Voronoi diagram partitioning in~\cite{honiden:2009:bgv:1681511.1682426}, we show that these components together cover every edge of the graph.
The expected query complexity in this phase is  $O(s\log{n}(n+\Delta^4(n/s)^2))$.

Letting $s=\sqrt{n}$, the expected total number of queries in the two phases is $O(\Delta^4\cdot n^{3/2}\cdot\log^2 n\cdot \log\log n)$.

We use the notation \textsc{Query}$(A,v)$ to mean \textsc{Query}$(a,v)$ for every $a\in A$,
and the notation \textsc{Query}$(A,B)$ to mean \textsc{Query}$(a,b)$ for every $a\in A$ and $b\in B$.

\vspace{3mm}
\begin{algorithm}{Modified-Center}{V,s}
A\=\emptyset ,\; W\=V\\
T\=K \cdot\log n\cdot \log{\log n} \text{ ($K=O(1)$ to be defined later)} \\
\begin{WHILE}{W\neq \emptyset}
 A'\= \text{Random subset of $W$ s.t.\ every node has prob. ${s}/{|W|}$}\\
 \CALL{Query}(A',V)\\
 A\=A \cup A'\\
\begin{FOR}{w\in W}
    X\=\text{Random multi-subset of} V \text{with} s\cdot T \text{elements}\\
    \CALL{Query}(X,w)\\
    \text{Let }\widehat{C_w^A}\=|X\cap C_w^A|\cdot n/|X|
\end{FOR}\\
W\=\{w\in W: \widehat{C_w^A}\geq 5n/s\}
\end{WHILE}\\
\RETURN A
\end{algorithm}

\begin{algorithm}{Local-Reconstruction}{V,A}
 E\=\emptyset\\
 \begin{FOR}{a\in A}
  B_a\=\{v\in V \mid \delta(v,a)\leq 2\}\\
  \CALL{Query}(B_a,V)\\
  D_a\=B_a\\
  \begin{FOR}{b\in B_a}
   D_a\=D_a\cup\{v\in V\mid\delta(b,v)<\delta(A,v)\}
 \end{FOR}\\
 \CALL{Query}(D_a,D_a)\\
  E\=E\cup \{(d_1,d_2)\in D_a\times D_a: \delta(d_1,d_2)=1\}
\end{FOR}\\
\RETURN E
\end{algorithm}

Figure~\ref{fig:voronoi} gives an illustration of Algorithm \textsc{Local-Reconstruction}($V,A$). Vertices $a_1,\dots,a_5$ are centers in $A$ and define subsets $D_{a_1},\dots,D_{a_5}$ which overlap slightly.
We will show in Lemma~\ref{lemma:shift} that the subsets $D_a$ (for all $a\in A$) together cover every edge in $E$. Thus the local reconstruction over every $D_a$ (for $a\in A$) is sufficient to reconstruct the graph.

\begin{figure}[ht]
\centering
\scalebox{0.5}{
\ifnum\fullversion=1
    \includegraphics{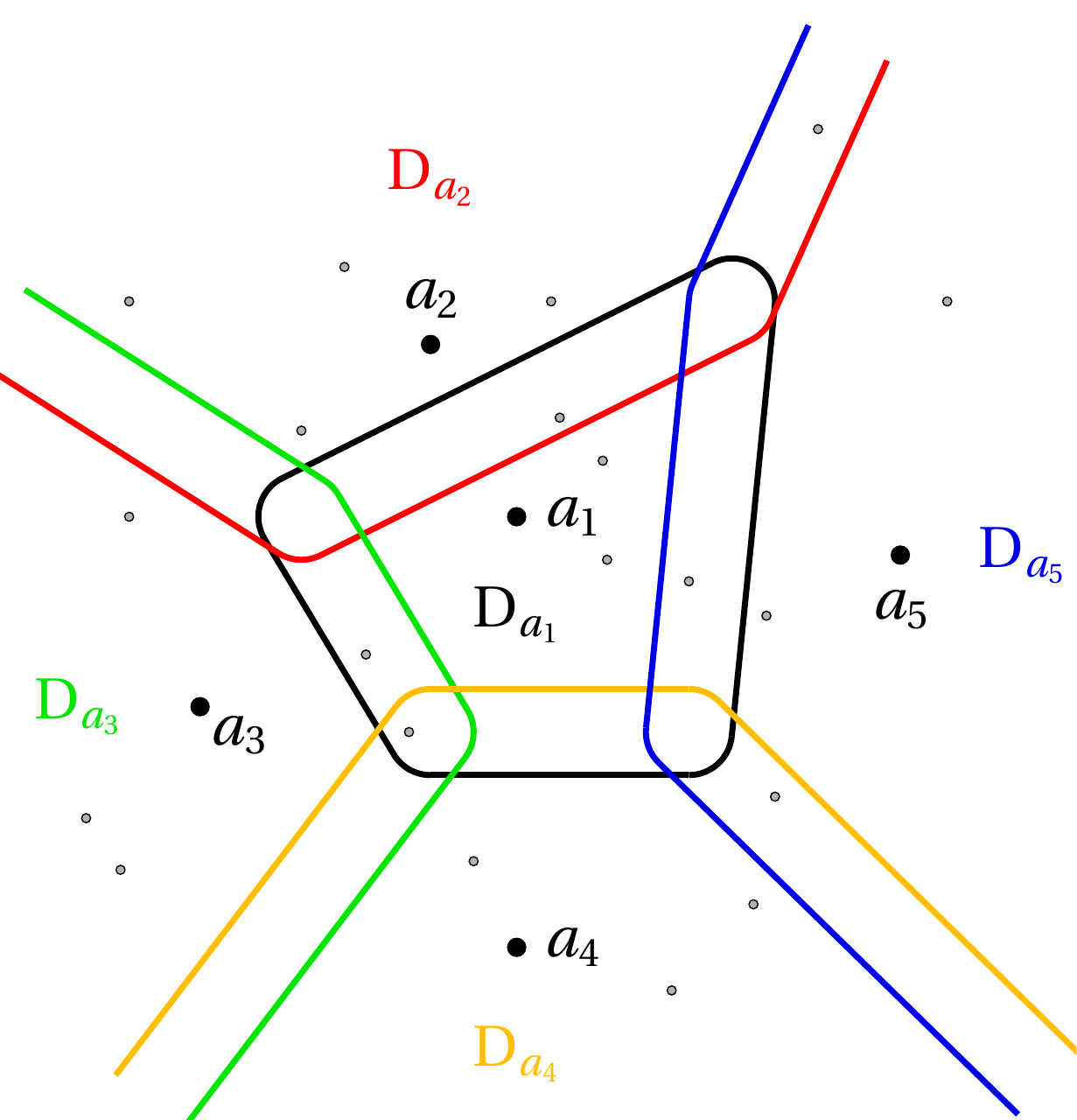}
\else
    \includegraphics{figure_black.pdf}
\fi
}
\caption{Partition by centers}
\label{fig:voronoi}
\end{figure}

Theorem~\ref{thm:boundeddegree} follows from Lemma~\ref{lemma:balanced separation} and~\ref{lemma:shift}.
\begin{lemma}
\label{lemma:balanced separation}
With probability at least $1/(4e)$, the \textsc{Modified-Center}$(V,s)$ algorithm takes  $O(s\cdot n\cdot \log^2 n \cdot \log \log n)$ queries and returns a set $A$ of size at most $4s\log{n}$ such that $|C_w^A|\leq 6n/s$ for every $w\in V$.
\end{lemma}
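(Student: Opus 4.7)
My plan is to combine three ingredients in sequence: a multiplicative Chernoff bound for the estimator $\widehat{C_w^A}$, a Thorup--Zwick--style shrinkage argument for $W$, and Markov inequalities to convert expected bounds into constant-probability ones.

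First I would show that the estimator is accurate. The count $|X\cap C_w^A|$ is a sum of $|X|=sT$ independent Bernoullis with success probability $|C_w^A|/n$, so its mean equals $T\cdot s\cdot|C_w^A|/n$. A multiplicative Chernoff bound gives: if $|C_w^A|\geq 6n/s$, then $|X\cap C_w^A|\geq 5T$ (equivalently $\widehat{C_w^A}\geq 5n/s$) with probability $1-e^{-\Omega(T)}$; if $|C_w^A|\leq 4n/s$, then $|X\cap C_w^A|<5T$ with the same probability. Taking $K$ in $T=K\log n\log\log n$ large enough and union-bounding over the at most $O(n\log n)$ uses of the estimator across all rounds, I obtain an event $\mathcal{E}$ of probability $1-o(1)$ on which every vertex $w$ that leaves $W$ satisfies $|C_w^A|\leq 6n/s$ at the moment it leaves. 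Because $C_w^A$ can only shrink as $A$ grows, this bound then persists for every $w\in V$ until the algorithm terminates.

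The main step, and the main obstacle, is a shrinkage lemma: conditioned on $W_i$ and on $\mathcal{E}$, $\mathbb{E}[|W_{i+1}|\mid W_i]\leq|W_i|/4$. Write $m=|W_i|$, $p=s/m$, and set $r'_v=\min\{\delta(a,v):a\in A\cup A'\}$, so that $v\in C_w^{A\cup A'}$ forces $w\notin A\cup A'$ and $\delta(w,v)<r'_v$. A double-counting argument yields $\sum_{w\in W_i}|C_w^{A\cup A'}|\leq\sum_{v\in V}|B(v,r'_v)\cap W_i|$, where $B(v,r)=\{u:\delta(u,v)<r\}$. Fixing $v$ and ordering the elements of $W_i$ by increasing distance to $v$, the quantity $|B(v,r'_v)\cap W_i|$ is the number of elements strictly preceding the first one sampled into $A'$, and its expectation is at most $1/p-1<m/s$ since the first-hit index is stochastically dominated by a geometric variable of parameter $p$. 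Summing over $v$, $\mathbb{E}[\sum_w|C_w^{A\cup A'}|]\leq nm/s$. Under $\mathcal{E}$, $w\in W_{i+1}$ forces $|C_w^{A\cup A'}|>4n/s$, so Markov's inequality applied per $w$ gives $\mathbb{E}[|W_{i+1}|]\leq (s/(4n))\cdot (nm/s)=m/4$.

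To finish, iterating the shrinkage yields $\mathbb{E}[|W_L|]\leq n/4^L$, so for $L=O(\log n)$ Markov provides $|W_L|=0$ with probability at least $3/4$, bounding the number of rounds. Since $\mathbb{E}[|A'_i|\mid W_i]=s$, one gets $\mathbb{E}[|A|]\leq Ls=O(s\log n)$, and Markov yields $|A|\leq 4s\log n$ with probability at least $3/4$. The per-round query count is at most $|A'_i|\cdot n+|W_i|\cdot sT$, summing over $L$ rounds to $O(ns\log^2 n\log\log n)$ in expectation; a last Markov step controls this. A union bound over $\mathcal{E}$ and these three Markov events delivers the claimed success probability of at least $1/(4e)$.
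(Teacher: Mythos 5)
Your proof is correct and follows essentially the same route as the paper: a Chernoff bound to certify the estimator $\widehat{C_w^A}$, the Thorup--Zwick shrinkage argument to bound the number of rounds and $|A|$, and Markov inequalities to convert expectations into constant-probability guarantees. The only differences are that you spell out the shrinkage step (the double-counting over balls $B(v,r'_v)$ and the geometric first-hit bound) which the paper delegates to a citation of Theorem~3.1 of Thorup--Zwick, and that you drive the Chernoff failure probability to $o(1)$ and union-bound three Markov events, whereas the paper accepts a $\sim 1/(2e)$ success probability for the estimation event and multiplies it by a single $1/2$ to obtain the stated $1/(4e)$.
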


\begin{remark*}
The difference between our algorithm and algorithm \textsc{Center}($G,s$) in~\cite{Thorup:2001:CRS:378580.378581} is that, \textsc{Center}($G,s$) eliminates $w\in W$ when $|C_w^A|<4n/s$, by calculating $|C_w^A|$ exactly, which needs $n$ queries in our model; while our algorithm gives an estimation of $|C_w^A|$ using $O(s\cdot \log n\cdot \log\log n)$ queries, so that with high probability, it eliminates $w\in W$ when $C_w^A<4n/s$ and it does not eliminate $w\in W$ when $C_w^A>6n/s$.
\end{remark*}

\begin{proof}
Fix $A$ and $w$ and let $Y_w=|X\cap C_w^A|= |\{x\in X\mid \delta(x,w)<\delta(x,A)\} |$. The expected value of $Y_w$ is $|C_w^A|\cdot |X|/n$. Since $X$ is random, by standard Chernoff bounds there is a constant $K$ such that, for any node $w$,
\[
  \begin{cases}
    \Proba{Y_w>5T} > 1-1/(4n\log{n}), &\text{ if } C_w^A>6n/s \text{ (and thus $E[Y_w]>6T$)}\\
    \Proba{Y_w<5T} > 1-1/(4n\log{n}), &\text{ if } C_w^A<4n/s \text{ (and thus $E[Y_w]<4T$)}.
  \end{cases}
\]
Let $\widehat{C_w^A}=Y_w\cdot n/|X|$, where $|X|=s\cdot T$.
When the number of nodes $w$ in estimation is at most $4n\log{n}$, with probability at least $(1-1/(4n\log{n}))^{4n\log{n}}\sim 1/e$, we have:
  \begin{equation}
  \label{chernoff}
  \begin{cases}
    \widehat{C_w^A}> 5n/s, &\text{ if } C_w^A>6n/s \\
    \widehat{C_w^A}< 5n/s, &\text{ if } C_w^A<4n/s
  \end{cases},\text{ for every $w$ in estimation. }
  \end{equation}
We assume that $n$ is large enough  that this probability is at least $1/(2e)$.

Using the same proof as that of Theorem~3.1 in~\cite{Thorup:2001:CRS:378580.378581}, we can prove that under  condition~(\ref{chernoff}), algorithm \textsc{Modified-Center}$(V,s)$ executes an expected number of at most $2\log n$ iterations of the while loop and returns a set $A$ of expected size at most $2s\log{n}$ such that $|C_w^A|\leq6n/s$ for every $w\in V$. Thus with probability at least $1/2$, the algorithm executes at most $4\log n$ iterations of the while loop and the set $A$ is of size at most $4s\log{n}$. The number of queries is $O(s\cdot n\cdot \log^2 n \cdot \log \log n)$ in this case, since every iteration takes $O(s\cdot n\cdot \log n\cdot \log\log n)$ queries.
So the lemma follows.
\end{proof}

\begin{lemma}\label{lemma:shift}
Under the conditions that  $|A|\leq 4s\log{n}$ and $|C_w^A|\leq 6n/s$ for every $w\in V$, Algorithm \textsc{Local-Reconstruction}$(V,A)$ finds all edges in the graph using $O(s\log{n}(n+\Delta^4(n/s)^2))$ queries.
\end{lemma}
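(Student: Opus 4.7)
The plan splits into two parts: (i) \emph{correctness}, namely that every edge $(u,v)\in E$ lies in $D_a\times D_a$ for some $a\in A$ and is therefore discovered by the exhaustive \textsc{Query}$(D_a,D_a)$ step; and (ii) a direct counting of queries using the degree bound $\Delta$ and the cluster-size hypothesis $|C_w^A|\leq 6n/s$. Observe first that phase~1 has already queried every pair in $A\times V$, so the distances $\delta(a,v)$ needed to form $B_a$ and to test membership in each $C_b^A$ (via $\delta(A,v)=\min_{a'\in A}\delta(a',v)$) are available for free in phase~2 and do not contribute to the count.

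For (i), fix $(u,v)\in E$ and assume without loss of generality $\delta(A,u)\leq \delta(A,v)$. Set $d=\delta(A,u)$ and pick $a\in A$ with $\delta(a,u)=d$. If $d\leq 1$, then $u\in B_a$, and the triangle inequality gives $\delta(a,v)\leq d+1\leq 2$, so $v\in B_a\subseteq D_a$. If $d\geq 2$, walk two steps along a shortest $a$-to-$u$ path and let $b$ be the vertex reached: then $\delta(a,b)=2$, hence $b\in B_a$, and $\delta(b,u)=d-2<d=\delta(A,u)$, so $u\in C_b^A\subseteq D_a$. For $v$, combine $\delta(A,v)\geq d$ with $\delta(b,v)\leq \delta(b,u)+\delta(u,v)=d-1$ to conclude $\delta(b,v)<\delta(A,v)$, hence $v\in C_b^A\subseteq D_a$ as well. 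In either regime, both endpoints lie in $D_a$, which matches the picture of overlapping Voronoi-like cells in Figure~\ref{fig:voronoi}.

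For (ii), the degree bound gives $|B_a|\leq 1+\Delta+\Delta(\Delta-1)=O(\Delta^2)$, and then $|D_a|\leq |B_a|+\sum_{b\in B_a}|C_b^A|=O(\Delta^2\cdot n/s)$ by the hypothesis on cluster sizes. Each iteration of the for-loop therefore costs $|B_a|\cdot n+|D_a|^2=O(\Delta^2 n+\Delta^4(n/s)^2)$ queries, and summing over the at most $4s\log n$ centers in $A$ yields the claimed $O(s\log n\,(n+\Delta^4(n/s)^2))$ bound (absorbing the $\Delta^2$ factor of the first term into the more generous stated form). The main obstacle in writing this out carefully is the proxy argument in~(i): one must verify that a single ``shift by~$2$'' along a shortest $a$-to-$u$ path produces a single witness $b\in B_a$ whose cluster captures both endpoints of the edge simultaneously, which is where the asymmetric choice $\delta(A,u)\leq \delta(A,v)$ and the strict inequality in the definition of $C_b^A$ are crucial.
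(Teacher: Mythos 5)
Your proof is correct and takes essentially the same route as the paper's: the same shift-by-two witness $b$ on a shortest $a$--$u$ path with the case split $\delta(A,u)\leq 1$ versus $\delta(A,u)\geq 2$, the same use of the asymmetric choice $\delta(A,u)\leq\delta(A,v)$ to place both endpoints in $C_b^A\subseteq D_a$, and the same counting via $|B_a|=O(\Delta^2)$ and $|D_a|=O(\Delta^2 n/s)$. Your per-center count $O(\Delta^2 n+\Delta^4(n/s)^2)$ is in fact slightly more careful than the lemma's stated first term $n$ (coming from \textsc{Query}$(B_a,V)$), but the discrepancy is immaterial once $s=\sqrt{n}$ is substituted in Theorem~\ref{thm:boundeddegree}.
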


\begin{proof}
Let $D_a=B_a\bigcup_{b\in B_a} C_b^A$. We will prove that for every edge $(u,v)$ in $E$, there is some $a\in A$, such that  $u$ and $v$ are both in $D_a$. Thus the algorithm is correct: it finds all edges in $E$.

Consider $(u,v)\in E$. Without loss of generality, we assume $\delta(A,u)\leq \delta(A,v)$.
Let $a\in A$ be such that $\delta(a,u)=\delta(A,u)$.
We will show that $u$ and $v$ are both in $D_a$.
When $\delta(a,u)\leq 1$, $u$ and $v$ are both in $B_a\subseteq D_a$.
So we consider only $\delta(a,u)\geq 2$.
Take $b$ to be the node, in any of the shortest paths from $a$ to $u$, such that $\delta(a,b)=2$.
Then $\delta(b,u)=\delta(a,u)-2$ and $\delta(b,v)\leq \delta(b,u)+\delta(u,v)=\delta(a,u)-1$ by the triangle inequality.
Using $\delta(a,u)=\delta(A,u)\leq \delta(A,v)$, we have $\delta(b,u)< \delta(A,u)$ and $\delta(b,v)<\delta(A,v)$. So $u$ and $v$ are both in $C_b^A$, which is a subset of $D_a$ since $b\in B_a$.

Because every $D_a$ (for $a\in A$) has size at most $\Delta^2\cdot 6n/s$, the total query complexity is $O(s\log{n}(n+\Delta^4(n/s)^2))$.
\end{proof}

\section{Degree Bounded Outerplanar Graphs}
\label{sec:degree bounded outerplanar}
In this section, we consider the connected graph $G=(V,E)$ to be \emph{outerplanar}~\cite{Chartrand1967} and of bounded degree $\Delta$.
We show how to reconstruct such a graph with expected query complexity $\tilde{O}(n)$.
Generally speaking, we partition the graph into balanced-sized subgraphs and recursively reconstruct these subgraphs.

\subsection{Self-contained Subsets, Polygons and Partitions}
Before giving details of the algorithm, we first need some new notions.

\begin{definition}
The subset $U\subseteq V$ is said to be \emph{self-contained}, if for every $(x,y)\in U\times U$, any shortest path in $G$ between $x$ and $y$ contains nodes only in $U$.
\end{definition}

For every subset $U\subseteq V$, note $G[U]$ to be the subgraph \emph{induced} by $U$, i.e., $G[U]$ has exactly the edges over $U$ in the graph.
It is easy to see that for every self-contained subset $U$, $G[U]$ is outerplanar and connected; and that the intersection of several self-contained subsets is again self-contained.

\begin{definition}
We say that the $k$-tuple $(x_1,\dots,x_k)\in V^k$ (where $k\geq 3$) forms a \emph{polygon} if $G[\{x_1,\dots,x_k\}]$ has exactly $k$ edges:  $(x_1,x_2),(x_2,x_3),\dots,(x_k,x_1)$.
\end{definition}

\begin{definition}
Let $U$ be a self-contained subset of $V$ and let $U_1,\dots,U_\eta$ be subsets of $U$.
We say that $\{U_1,\dots,U_\eta\}$ is a \emph{partition} of $U$ if every $U_i$ is self-contained, and for every edge $(x,y)$ in $G[U]$, there exists some $U_i$ ($1\leq i\leq \eta$) such that $x$ and $y$ are both in $U_i$.
Let $\beta<1$ be some constant.
The partition $\{U_1,\dots,U_\eta\}$ of $U$ is said to be \emph{$\beta$-balanced} if every $U_i$ is of size at most $\beta |U|$.
\end{definition}

Given any partition of $U$, the reconstruction problem over $U$ can be reduced to the independent reconstruction over every $U_i$ ($1\leq i\leq \eta$).

Let $U$ be a self-contained subset of $V$.
For every vertex $v\in U$, its removal would separate $U$ into $n_v$ $(n_v\geq 1)$ connected components.
For every $i\in [1,n_v]$, let $S_{v,i}^*$ be the set of nodes in the $i^{\rm th}$ component and let $S_{v,i}=S_{v,i}^*\cup \{v\}$.
We say that $\{S_{v,1},\dots,S_{v,n_v}\}$ is the \emph{partition of $U$ by the node $v$}.

\subsection{Balanced-Partition Algorithm}
Let us now introduce the main algorithm \textsc{Balanced-Partition}$(U)$, which takes as input a self-contained subset $U\subseteq V$ with $|U|\geq 10$ and returns a $\beta$-balanced partition of $U$, for some constant $\beta\in(0.7,1)$.
The algorithm takes a sampling of $2\omega$ nodes $(a_1,\dots,a_\omega,b_1,\dots,b_\omega)$, where $\omega=C\cdot \log |U|$ for some constant $C>1$, and tries to find a $\beta$-balanced partition of $U$ under this sampling.
It stops if it finds such a partition, and repeatedly tries another sampling otherwise.
Below is the general framework of our algorithm.
The details of the algorithmic implementation are given in \apref{appendix:algo-implement}, where we give the constants $C$ and $\beta$.

\begin{enumerate}
\item Take a sampling of $2\omega$ nodes $(a_1,\dots,a_\omega,b_1,\dots,b_\omega)$.
      For every $i\in[1,\omega]$, compute a shortest path between $a_i$ and $b_i$.
      Let $x$ be some node with the most occurrences in the $\omega$ paths above.
\item Partition $U$ into $S_{x,1},\dots,S_{x,n_x}$ by the node $x$.
      If all these sets have size at most $\beta |U|$, return $\{S_{x,1},\dots,S_{x,n_x}\}$;
      otherwise let $D=S_{x,k}$ be the largest set among them and let $V_0=U\backslash S_{x,k}^*$.
\item In the set $D$, compute the neighbors of $x$ in order: $y_1,\dots,y_{\lambda}$, where $\lambda\leq \Delta$.
    If $\lambda=1$, go to Step 1.
\item For every $i\in [1,\lambda]$, partition $U$ into $S_{y_i,1},\dots,S_{y_i,n_{y_i}}$ by $y_i$. Let $S_{y_i,k_i}$  be the subset containing $x$ and let $V_i=U\backslash S_{y_i,k_i}^*$ (see Figure~\ref{fig:partition by neighbors}). If $|V_i|>\beta |U|$, go to Step 1.
\item Let $T=D\cap S_{y_1,k_1}\cap\cdots\cap S_{y_{\lambda},k_\lambda}$.
      Separate $T$ into subsets $T_1$,\dots,$T_{\lambda-1}$ as in Figure~\ref{fig:partition by neighbors}.
      If every $T_i$ has at most $\beta |U|$ nodes, return $\{T_1,\dots,T_{\lambda-1},V_0,\dots,V_\lambda\}$.
\item Let $T_j$ be the set with more than $\beta |U|$ nodes. Find the unique polygon $(q_1,\dots,q_l)$ in $T_j$ that goes by nodes $x$, $y_j$ and $y_{j+1}$.
\item For every $i\in[1,l]$, partition $U$ into $S_{q_i,1},\dots,S_{q_i,n_{q_i}}$ by $q_i$. Let $S_{q_i,m_i}$ be the subset containing the polygon above and let $W_i=U\backslash S_{q_i,m_i}^*$ (see Figure~\ref{fig:partition by polygon}). If $|W_i|>\beta |U|$, go to Step 1.
\item Let $R=S_{q_1,m_1}\cap\cdots\cap S_{q_l,m_l}$. Separate $R$ into subsets $R_1,\dots,R_l$ as in Figure~\ref{fig:partition by polygon}. If some $R_i$ has more than $\beta |U|$ nodes, go to Step 1; else return $\{R_1,\dots,R_l, W_1,\dots,W_l\}$.
\end{enumerate}

\tikzstyle{vertex}=[font=\small]
\begin{figure}
\begin{minipage}[t]{0.48\textwidth}
\scalebox{0.8}{
\begin{tikzpicture}
\node[vertex] (x) at (90:2) {$x$};
\node[vertex] (y1) at (38.5714:2) {$y_1$};
\node[vertex] (y2) at (-12.8572:2) {};
\node[vertex] (y3) at (-64.2858:2) {$y_i$};
\node[vertex] (y4) at (-115.7144:2) {$y_{i+1}$};
\node[vertex] (y5) at (-167.1430:2) {};
\node[vertex] (y6) at (-218.5716:2) {$y_\lambda$};

\foreach \from/\to in {y1/y2,y2/y3,y3/y4/,y4/y5,y5/y6}
\path (\from) edge [dashed, bend left=50,looseness=1] (\to);

\foreach \to in {y1,y2,y3,y4,y5,y6}
\path (x) edge (\to);

\draw [rounded corners=15,dashed] (x) -- (75:3.5) -- (105:3.5) -- (x);
\draw [rounded corners=15,dashed] (y1) -- (23.571400:3.5) -- (53.571400:3.5) -- (y1);
\draw [rounded corners=15,dashed] (y2) -- (-27.857200:3.5) -- (2.142800:3.5) -- (y2);
\draw [rounded corners=15,dashed] (y3) -- (-79.285800:3.5) -- (-49.285800:3.5) -- (y3);
\draw [rounded corners=15,dashed] (y4) -- (-130.714400:3.5) -- (-100.714400:3.5) -- (y4);
\draw [rounded corners=15,dashed] (y5) -- (-182.143000:3.5) -- (-152.143000:3.5) -- (y5);
\draw [rounded corners=15,dashed] (y6) -- (-233.571600:3.5) -- (-203.571600:3.5) -- (y6);
\node[vertex] at (90:3){$V_0$};
\node[vertex] at (38:3) {$V_1$};
\node[vertex] at (-12:3) {};
\node[vertex] at (-64:3) {$V_i$};
\node[vertex] at (-115:3) {$V_{i+1}$};
\node[vertex] at (-167:3) {};
\node[vertex] at (-218:3) {$V_\lambda$};
\node[vertex]  at ($(y1)!0.4!(y2)$) {$T_1$};
\node[vertex]  at ($(y6)!0.4!(y5)$) {$T_{\lambda-1}$};
\node[vertex]  at ($(y3)!0.5!(y4)!0.05!(x)$) {$T_i$};
\end{tikzpicture}
}
\caption{Partition by neighbors}
\label{fig:partition by neighbors}
\end{minipage}
\hfill
\begin{minipage}[t]{0.48\textwidth}
\scalebox{0.8}{
\begin{tikzpicture}
\node[vertex] (q1) at (90:2) {$q_1$};
\node[vertex] (q2) at (38.5714:2) {$q_2$};
\node[vertex] (q3) at (-12.8572:2) {};
\node[vertex] (q4) at (-64.2858:2) {$q_i$};
\node[vertex] (q5) at (-115.7144:2) {$q_{i+1}$};
\node[vertex] (q6) at (-167.1430:2) {};
\node[vertex] (q7) at (-218.5716:2) {$q_l$};
\foreach \from/\to in {q1/q2,q2/q3,q3/q4/,q4/q5,q5/q6,q6/q7,q7/q1}{
\path (\from) edge [dashed, bend left=60,looseness=1] (\to);
\path (\from) edge (\to);
}
\path (q1) edge node[above,vertex]{$R_1$} (q2);
\path (q1) edge node[above,vertex]{$R_l$} (q7);
\path (q4) edge node[below,vertex]{$R_i$} (q5);
\draw [rounded corners=15,dashed] (q1) -- (75:3.5) -- (105:3.5) -- (q1);
\draw [rounded corners=15,dashed] (q2) -- (23.571400:3.5) -- (53.571400:3.5) -- (q2);
\draw [rounded corners=15,dashed] (q3) -- (-27.857200:3.5) -- (2.142800:3.5) -- (q3);
\draw [rounded corners=15,dashed] (q4) -- (-79.285800:3.5) -- (-49.285800:3.5) -- (q4);
\draw [rounded corners=15,dashed] (q5) -- (-130.714400:3.5) -- (-100.714400:3.5) -- (q5);
\draw [rounded corners=15,dashed] (q6) -- (-182.143000:3.5) -- (-152.143000:3.5) -- (q6);
\draw [rounded corners=15,dashed] (q7) -- (-233.571600:3.5) -- (-203.571600:3.5) -- (q7);
\node[vertex] at (90:3){$W_1$};
\node[vertex] at (38:3) {$W_2$};
\node[vertex] at (-12:3) {};
\node[vertex] at (-64:3) {$W_i$};
\node[vertex] at (-115:3) {$W_{i+1}$};
\node[vertex] at (-167:3) {};
\node[vertex] at (-218:3) {$W_l$};
\end{tikzpicture}
}
\caption{Partition by polygon}
\label{fig:partition by polygon}
\end{minipage}
\hfill
\end{figure}

In \apref{appendix:algo-implement}, we give formal definitions and algorithms for subproblems: shortest path between two nodes; partition $U$ by a given node; obtain the neighbors of $x$ in order; partitions $U$ with respect to an edge;  and find the unique polygon that goes by nodes $x$, $y_j$ and $y_{j+1}$.
Finally, we give an improved implementation of partitioning $U$ by a polygon (Steps 7 - 8).
All these algorithms use $O(\Delta\cdot|U|\log^2 |U|)$ queries. It is easy to see that the algorithm \textsc{Balanced-Partition}$(U)$ always stops with a $\beta$-balanced partition of $U$.

\subsection{From Balanced Partitioning to Graph Reconstruction}

Let us show how to reconstruct the graph using \textsc{Balanced-Partition}$(U)$ assuming the following proposition, which will be proved in Section~\ref{sec:complexity analysis}.
\begin{proposition}
\label{balanced partition}
For any self-contained subset $U\subseteq V$ with $|U|\geq 10$, the randomized algorithm \textsc{Balanced-Partition}$(U)$ returns a $\beta$-balanced partition of $U$ with query complexity $O(\Delta\cdot|U|\log^2 |U|)$.
\end{proposition}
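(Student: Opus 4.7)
The plan is to verify three properties of each iteration of \textsc{Balanced-Partition}$(U)$: the per-iteration query complexity, the correctness of any partition that is returned, and a constant lower bound on the probability that an iteration succeeds (rather than restarting). Together these yield an expected total query cost of $O(\Delta\cdot|U|\log^2|U|)$.

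For the per-iteration cost, I would invoke the bounds already stated in the excerpt for the five subroutines used in Steps 1--8 (shortest path between two nodes, partition by a given node, cyclic ordering of the neighbors of $x$, polygon identification, and partition by a polygon), each costing $O(\Delta\cdot|U|\log^2|U|)$ queries. Step 1 performs $\omega=O(\log|U|)$ shortest-path computations; Steps 3, 4, and 7 loop at most $O(\Delta)$ times over neighbors or polygon vertices, and outerplanarity guarantees that the polygon identified in Step 6 has length $O(\Delta)$. Summing these contributions fits within the stated per-iteration bound.

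For correctness, I would first observe that the atomic pieces produced by the algorithm are self-contained: each set $S_{v,i}$ obtained by removing a vertex of a self-contained set is self-contained, and intersections preserve this property (as noted in the excerpt), so every $T_i$, $V_i$, $R_i$, $W_i$ is self-contained. To show the returned sets cover every edge of $G[U]$, I would case split on an edge $(u,v)$ relative to the cutting node $x$ (resp. polygon $q_1,\dots,q_l$): if $u$ and $v$ lie in the same component after removal, then $(u,v)$ is contained in the corresponding $V_i$ or $W_i$; otherwise the outerplanar embedding forces $(u,v)$ to lie in a single region between consecutive neighbors of $x$ (resp. consecutive polygon vertices), hence in some $T_j$ or $R_j$. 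A picture-by-picture check against Figures 3 and 4 turns this into a short combinatorial argument.

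The main obstacle is showing that a single iteration succeeds with probability bounded below by a positive constant. The key structural ingredient I would establish is a centroid-type lemma for outerplanar graphs: there exists a node $x^{*}$ appearing on a constant fraction of all shortest paths in $G[U]$ such that removing $x^{*}$ (together with, if necessary, passing to a polygon through $x^{*}$ as in Steps 6--8) partitions $U$ in a $\beta$-balanced way. Chernoff concentration over the $\omega=C\log|U|$ sampled paths then guarantees, for $C$ sufficiently large, that the empirically most visited node $x$ chosen in Step 1 coincides with (or shares the separating properties of) $x^{*}$ with probability $1-O(|U|^{-c})$ for arbitrary $c$. A case analysis over the four restart conditions ($\lambda=1$; some $|V_i|>\beta|U|$; some $|W_i|>\beta|U|$; some $|R_i|>\beta|U|$) then bounds each failure probability by an arbitrarily small constant, forcing overall success probability $\geq 1/2$ and hence expected number of iterations $O(1)$. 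Calibrating $C$ and $\beta\in(0.7,1)$ at the end closes the argument.
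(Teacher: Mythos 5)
The outline of the cost and correctness parts matches the paper's (which treats them as routine), but one supporting claim is wrong: outerplanarity does \emph{not} bound the length of the polygon found in Step 6 by $O(\Delta)$. A single cycle of length $|U|$ is outerplanar with degree $2$, yet the polygon has $l=\Theta(|U|)$ vertices; this is precisely why the paper replaces the naive $O(l\cdot\Delta\cdot|U|)$ partition-by-polygon with a dichotomy-based routine costing $O(\Delta\cdot|U|\log|U|)$. If you simply invoke the stated subroutine bounds you are fine, but the ``polygon has $O(\Delta)$ vertices'' justification would not survive scrutiny.

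The genuine gap is in the success-probability argument. Chernoff concentration over $\omega=C\log|U|$ sampled paths does \emph{not} let you conclude that the empirical argmax $x$ coincides with a distinguished centroid $x^{*}$: many nodes can have $p$-values within $o(1)$ of each other, and the empirical maximizer can be any of them. All the sampling gives you (the paper's Lemma~\ref{sample accuracy}) is that $p_x>\alpha/2$ with probability $\geq 2/3$, where $\alpha$ is the constant from Lemma~\ref{most popular node}; and Lemma~\ref{most popular node} itself only asserts a node lying on an $\alpha$-fraction of shortest paths, \emph{not} that its removal balances $U$ (for a long cycle, removing any single vertex leaves everything connected). The missing idea is the deterministic bridge between ``$p_x$ is large'' and ``the algorithm does not restart,'' namely Lemma~\ref{lemma:self-contained}: each of the four restart conditions in Steps 3, 4, 7, 8 forces the existence of a self-contained set $S$ with $x\notin S$ and $|S|\geq\beta|U|$, and any such $S$ forces $p_x\leq 1-(|S|/|U|)^2\leq 1-\beta^2=\alpha/2$ once one sets $\beta=\sqrt{1-\alpha/2}$. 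Your plan to ``bound each failure probability by an arbitrarily small constant'' via a case analysis cannot work as stated, because conditioned on the choice of $x$ the four restart conditions are deterministic events --- there is no remaining randomness to concentrate. The only probabilistic step is the single event $\{p_x\leq\alpha/2\}$, bounded by $1/3$, which covers all four cases at once.
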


Based on the algorithm \textsc{Balanced-Partition}$(U)$, we reconstruct the graph recursively: we partition the vertex set $V$ into self-contained subsets $V_1,\dots,V_k$ such that every $V_i$ has size $\leq \beta n$; for every $V_i$, if $|V_i|<10$, we reconstruct $G[V_i]$ using at most $9^2$ queries; otherwise we partition $V_i$ into self-contained subsets of size at most $\beta|V_i|\leq \beta^2 n$, and continue with these subsets, etc.
Thus the number of levels $L$ of the recursion is $O(\log n)$.

Every time \textsc{Balanced-Partition}$(U)$ returns a partition $\{U_1, \dots,U_k\}$, we always have $|U_1|+\cdots+|U_k|\leq |U|+2(k-1)$.
For every $1\leq i\leq L$, let $U_{i,1},\dots,U_{i,M_i}$ be all sets on the $i^{\rm th}$ level of the recursion.
We then have $|U_{i,1}|+\cdots+|U_{i,M_i}|\leq 3n$.
Thus the total query complexity on every level is $O(\Delta\cdot n\log^2 n)$ by Proposition~\ref{balanced partition}.
So we have the following theorem.

\begin{theorem}
\label{thm:outerplanar}
Assume that the outerplanar graph $G$ has bounded degree $\Delta$.
We have a randomized algorithm for the metric reconstruction problem with query complexity $O(\Delta\cdot n\log^3 n)$, which is $\tilde{O}(n)$ when $\Delta$ is constant.
\end{theorem}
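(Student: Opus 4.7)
The plan is to iterate Proposition~\ref{balanced partition} to build a recursive decomposition of $V$: starting from $U=V$, which is trivially self-contained, I call \textsc{Balanced-Partition}$(U)$ whenever $|U|\ge 10$ and recurse on every part it returns; when $|U|<10$ I simply query all at most $\binom{9}{2}=36$ pairwise distances and read off the edges of $G[U]$ directly. The leaves of this recursion are then reconstructed by brute force, and the union of their edge sets is returned.

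Correctness propagates down the tree: by the definition of a partition, every part of the output of \textsc{Balanced-Partition}$(U)$ is itself self-contained, so the precondition of Proposition~\ref{balanced partition} holds at every recursive call; and since every edge of $G[U]$ lies inside some part $U_j$, a straightforward induction on the recursion depth shows that every edge of $G$ appears in the subproblem of some leaf, hence is reported. The depth of the recursion is $L=O(\log_{1/\beta} n)=O(\log n)$, because each part has size at most $\beta |U|$ with the constant $\beta<1$ supplied by \textsc{Balanced-Partition}.

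The main obstacle is to control the aggregate size of the sets active at a fixed level so that the work done per level stays $O(\Delta\,n\log^2 n)$, and not $\omega(n)$. I would prove the per-call inequality $|U_1|+\cdots+|U_k|\le |U|+2(k-1)$ by inspecting the three shapes of partitions produced by \textsc{Balanced-Partition}: splitting by a single vertex $x$, by the ordered neighbors of $x$ (Figure~\ref{fig:partition by neighbors}), and along a polygon (Figure~\ref{fig:partition by polygon}). In each case only a bounded number of boundary vertices are shared between consecutive parts, and summing this overhead gives $2(k-1)$. Telescoping this inequality from the root down to level $i$ yields $\sum_j |U_{i,j}|\le n+2(M_i-1)$ for $M_i$ the number of sets at level $i$; since each part contributes at least one non-duplicated vertex, $M_i\le n$, so $\sum_j |U_{i,j}|\le 3n$ as claimed in the text preceding the theorem.

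Putting it together, the work on level $i$ is, by Proposition~\ref{balanced partition}, at most
\[
\sum_{j=1}^{M_i} O\bigl(\Delta\,|U_{i,j}|\log^2|U_{i,j}|\bigr)\le O(\Delta\log^2 n)\sum_{j=1}^{M_i}|U_{i,j}|=O(\Delta\,n\log^2 n),
\]
and the leaves contribute an additional $O(n)$ queries in total. Summing over the $L=O(\log n)$ levels gives the $O(\Delta\cdot n\log^3 n)$ bound, which is $\tilde{O}(n)$ for constant $\Delta$, establishing Theorem~\ref{thm:outerplanar}.
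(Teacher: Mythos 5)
Your proposal matches the paper's proof essentially verbatim: the same recursive application of \textsc{Balanced-Partition}, the same $O(\log n)$ depth bound from $\beta$-balancedness, the same per-call overhead bound $|U_1|+\cdots+|U_k|\le |U|+2(k-1)$ leading to $\sum_j|U_{i,j}|\le 3n$ per level, and hence $O(\Delta\cdot n\log^2 n)$ work per level times $O(\log n)$ levels. The only differences are that you spell out the correctness induction and the justification of the $3n$ bound, both of which the paper leaves implicit.
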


\subsection{Complexity Analysis of the \textsc{Balanced-Partition} Algorithm}
\label{sec:complexity analysis}
Now let us prove Proposition~\ref{balanced partition}.
Since the query complexity to try every sampling is $O(\Delta\cdot|U|\log^2 |U|)$, we only need to prove, as in the following proposition, that for every sampling, the algorithm finds a $\beta$-balanced partition with high probability. This guarantees that the average number of samplings is a constant, which gives the $O(\Delta\cdot|U|\log^2 |U|)$ query complexity in Proposition~\ref{balanced partition}.

\begin{proposition}
\label{failure probability}
In the algorithm \textsc{Balanced-Partition}$(U)$, every sampling of $(a_1,\dots,a_\omega,b_1,\dots,b_\omega)$ gives a $\beta$-balanced partition with probability at least $2/3$.
\end{proposition}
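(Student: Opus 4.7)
The plan is to identify a distinguished central vertex $x^\ast\in U$ that lies on many shortest paths in $G[U]$ and would trigger a successful partition through one of the three strategies of Steps~2, 5, and~8, and then to show via concentration that the vertex $x$ chosen in Step~1 behaves like $x^\ast$ with probability at least $2/3$.

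\emph{Existence of a good central vertex.} Since $G[U]$ is outerplanar and connected, each block of its block-cut tree is either a single edge or a polygon. A centroid argument on the block-cut tree produces a balanced separator $S\subseteq U$ with $|S|\le k=O(\Delta)$ such that $U\setminus S$ splits into components each of size at most $\alpha|U|$ for some constant $\alpha<\beta$; by construction $S$ is either a single cut vertex, a set of neighbors of one vertex inside a component, or the vertices of a single polygon. A uniformly random ordered pair $(a,b)\in U\times U$ has its shortest $a$--$b$ path crossing $S$ with probability at least $1-\alpha^2$, and averaging over the vertices of $S$ yields a vertex $x^\ast\in S$ lying on that random shortest path with probability at least $\gamma:=(1-\alpha^2)/k=\Omega(1/\Delta)$.

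\emph{Concentration of the empirical maximum.} Let $f(v)$ denote the true probability that a random ordered pair has its shortest path through $v$. The $\omega$ samples being i.i.d., a Chernoff bound gives that, for each $v\in U$, the empirical count of $v$ in the $\omega$ sampled paths deviates from $\omega f(v)$ by more than $\omega\gamma/4$ with probability at most $2\exp(-c\omega)$ for some constant $c>0$. Taking $C$ large enough in $\omega=C\log|U|$ and union-bounding over the $|U|$ vertices, with probability at least $2/3$ every empirical count is simultaneously within $\omega\gamma/4$ of its mean; in that good event the empirical maximum $x$ selected in Step~1 satisfies $f(x)\ge f(x^\ast)-\gamma/2\ge\gamma/2$, so $x$ is itself a central vertex.

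\emph{Partition succeeds for every central $x$.} The remaining, and main, step is the structural claim that any vertex $x$ with $f(x)\ge\gamma/2$ is processed by Steps~2--8 into a $\beta$-balanced partition without triggering a restart. High betweenness means that $x$ is a bottleneck through which a constant fraction of shortest paths must pass; in an outerplanar graph such a bottleneck must sit on a balanced separator of one of three types exactly matching the algorithm: $\{x\}$ itself (handled in Step~2), the neighbors of $x$ in the overflow component $D$ (Steps~3--5), or a face of $G[U]$ incident to $x$ whose bounding polygon is the separator (Steps~6--8). The hard part is this structural case analysis: using outerplanarity to control polygon lengths and the uniqueness in Step~6 of the polygon through $x$, $y_j$ and $y_{j+1}$, one must verify that each of the induced pieces $V_i$, $T_i$, $W_i$, $R_i$ has size at most $\alpha|U|\le\beta|U|$. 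The constants $\alpha$, $\gamma$, $\beta$ and $C$ will then be chosen in that order so that all the inequalities simultaneously fit.
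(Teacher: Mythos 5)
Your first two steps track the paper's Lemmas~\ref{most popular node} and~\ref{sample accuracy} closely: a constant-betweenness vertex exists (the paper proves this by a polygon/tree decomposition rather than your block-cut-tree centroid, but either route plausibly gives a constant $\alpha$ since $\Delta$ is constant), and Chernoff plus a union bound transfers this to the empirically chosen $x$. The problem is your third step, which you yourself flag as ``the remaining, and main, step'' and then do not prove. You propose to show \emph{directly} that every vertex $x$ with $p_x\geq\gamma/2$ passes through Steps~2--8 without a restart, i.e.\ that all of the pieces $V_i$, $T_i$, $W_i$, $R_i$ are small. That forward structural argument is not carried out, and the promised ordering of constants does not tell us \emph{why} high betweenness of $x$ forces every piece to be small. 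As written, the proof has a hole exactly where the real content lies.

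The paper closes this gap with a contrapositive that you are missing. One checks (Lemma~\ref{lemma:self-contained}) that \emph{whenever} the algorithm restarts in Step~3, 4, 7 or 8, the offending piece (respectively $D\setminus\{x\}$, $V_i$, $W_i$ with $i\geq 2$, or $R_i$ with $2\leq i\leq l-1$) is a self-contained subset $S$ with $x\notin S$ and $|S|\geq\beta|U|$. Self-containedness is the key: every shortest path between two points of $S$ stays inside $S$ and hence avoids $x$, so $p_x\leq 1-(|S|/|U|)^2\leq 1-\beta^2$. Choosing $\beta=\sqrt{1-\alpha/2}$ makes this exactly contradict $p_x>\alpha/2$, so conditioned on the good event of your concentration step no restart can occur. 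This single inequality replaces the entire case analysis you defer (``verify that each of the induced pieces has size at most $\alpha|U|$''), and it is also where the specific value of $\beta$ comes from; without it your argument does not go through. Note also that the verification that the oversized piece avoids $x$ requires small side arguments (e.g.\ $W_1=V_0$ and $R_1,R_l$ contain $x$ but are provably small), which your sketch does not address.
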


To prove Proposition~\ref{failure probability}, we need Lemmas~\ref{lemma:self-contained},~\ref{most popular node} and~\ref{sample accuracy}, whose proofs are in \apref{appendix:proofs}.

\begin{lemma}\label{lemma:self-contained}
Let $(a_1,\dots,a_\omega,b_1,\dots,b_\omega)$ be any sampling during the algorithm \textsc{Balanced-Partition}$(U)$.
Let $x$ be the node computed from this sampling in Step 1.
We say that a set $S$ is a \emph{$\beta$-bad set}, if it is a self-contained subset of $U$ such that $x\notin S$ and $|S|\geq \beta |U|$ for some constant $\beta$.
Then $x$ does not lead to a  $\beta$-balanced partition of $U$ only when there exists some $\beta$-bad set.
\end{lemma}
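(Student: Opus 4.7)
The plan is to prove the contrapositive: if, while processing the given $x$, the algorithm \textsc{Balanced-Partition}$(U)$ reaches one of the ``go to Step~1'' branches in Steps~3, 4, 7 or 8, then I can exhibit a self-contained $S\subseteq U$ with $x\notin S$ and $|S|\geq\beta|U|$. The proof is a case analysis over these four failure branches, and throughout it relies on the following cut-vertex observation: if $v$ is any vertex of $U$ whose removal splits $G[U]$ into components $S_{v,1}^*,\dots,S_{v,n_v}^*$, then for every nonempty $J\subseteq\{1,\dots,n_v\}$ the set $\{v\}\cup\bigcup_{j\in J}S_{v,j}^*$ is self-contained. Indeed, $U$ is self-contained so shortest paths between two such nodes stay in $U$, and since a shortest path is simple it can cross $v$ at most once, hence cannot escape this union.

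The two easiest branches are Steps~4 and~7. If Step~4 fails then $|V_i|>\beta|U|$ for some $i$, where $V_i=\{y_i\}\cup\bigcup_{j\neq k_i}S_{y_i,j}^*$; the cut-vertex observation at $v=y_i$ shows $V_i$ is self-contained, and $x\notin V_i$ because $x$ lies in the excluded component $S_{y_i,k_i}^*$, so $V_i$ is $\beta$-bad. A Step~7 failure with $q_i\neq x$ is identical with $q_i$ in place of $y_i$, yielding the $\beta$-bad set $W_i$. The sub-case $q_i=x$ is impossible: $W_i$ would then coincide with $V_0=U\setminus S_{x,k}^*$, whose size is $|U|-|D|+1\leq(1-\beta)|U|+1$, less than $\beta|U|$ for $\beta>1/2$ and $|U|\geq 10$, contradicting $|W_i|>\beta|U|$.

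For the Step~3 restart ($\lambda=1$) I would take $S=S_{x,k}^*=D\setminus\{x\}$. The point is that $x$ has a unique neighbor $y_1$ in $D$: any shortest path in $G$ between two nodes of $S$ that visited $x$ would need two distinct neighbors of $x$ along it, at most one of which ($y_1$) lies in $S$; any other neighbor of $x$ lies in $V_0\setminus\{x\}$, which is separated from $S$ by the cut vertex $x$, so returning to $S$ would require a second use of $x$, impossible for a simple path. Hence the path avoids $x$ and stays inside the component $S_{x,k}^*$, proving self-containedness of $S$. Since $x\notin S$ and $|S|=|D|-1$ with $|D|>\beta|U|$, we get $|S|\geq\beta|U|$ after a harmless adjustment of the constant (valid for $|U|\geq 10$).

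The last and hardest case is a restart in Step~8, where some $R_i$ exceeds $\beta|U|$. Here I would invoke the outerplanar embedding of $G[U]$: the polygon $(q_1,\dots,q_l)$ is a simple cycle, and by outerplanarity no vertex of $U$ lies strictly inside it, so the set $R=\bigcap_i S_{q_i,m_i}$ partitions naturally into the slices $R_1,\dots,R_l$ sitting between consecutive polygon vertices, as depicted in Figure~\ref{fig:partition by polygon}. Applying the cut-vertex observation simultaneously at $q_i$ and $q_{i+1}$ shows that $R_i$ is self-contained, since any escaping shortest path would have to cross one of these two vertices and outerplanarity forbids re-entry without revisiting. Moreover $x$ is one of the polygon vertices, hence either a corner of some $R_{i'}$ adjacent to it (and excluded from its interior by construction) or separated from $R_i$ by another $q_j$, so $x\notin R_i$. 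Thus $R_i$ is $\beta$-bad. The main obstacle of the whole proof is this Step~8 case: one must translate the picture in Figure~\ref{fig:partition by polygon} into a formal definition of $R_i$, and combine the outerplanar embedding with the defining property of the polygon (that it passes through $x$, $y_j$ and $y_{j+1}$) to verify that every $R_i$ avoids $x$ and remains self-contained.
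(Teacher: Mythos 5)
Your overall strategy is exactly the paper's: a case analysis over the four ``go to Step~1'' branches (Steps~3, 4, 7, 8), exhibiting in each case a large self-contained set avoiding $x$. Your treatment of Steps~3, 4 and~7 matches the paper's (the paper is terser in Step~3 but your simple-path argument is the intended justification, and your exclusion of the sub-case $W_i=V_0$ in Step~7 is precisely the paper's size bound $|V_0|=|U|-|D|+1<\beta|U|$).

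There is, however, a genuine gap in your Step~8 case. The slices $R_1,\dots,R_l$ are built by the partition-by-edge construction, and each such part \emph{includes} the two endpoints of its defining polygon edge (this is forced: a partition must cover every edge of $G[U]$, in particular the polygon edges $(q_i,q_{i+1})$). Since $x=q_1$ is a corner of both $R_1$ and $R_l$, we have $x\in R_1$ and $x\in R_l$, so neither can serve as a $\beta$-bad set; your claim that $x$ is ``excluded from its interior by construction'' contradicts the construction. You therefore must rule out that the oversized slice is $R_1$ or $R_l$, and your proof does not do so. The paper's argument is the analogue of what you did for $W_1$ in Step~7: since the algorithm only reaches Step~6 when $|T_j|>\beta|U|$, and $R_1\cup R_l\subseteq (U\setminus T_j)$ up to the four corner vertices $q_1,q_2,q_{l},q_{1}$ (counted with repetition), one gets $|R_1|+|R_l|\leq |U|-|T_j|+4<|U|-\beta|U|+4\leq\beta|U|$ for $\beta\in(0.7,1)$ and $|U|\geq 10$. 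Hence the oversized $R_i$ has $i\in[2,l-1]$, and those slices indeed avoid $x$. With this size bound inserted, your proof is complete and coincides with the paper's.
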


For any node $u\in U$, define $p_u$ to be be the probability  that $u$ is in at least one of the shortest paths between two nodes $a$ and $b$, where $a$ and $b$ are chosen uniformly and independently at random from $U$.

\begin{lemma}
\label{most popular node}
There exists some constant $\alpha\in (0,1)$, s.t.\ in every outerplanar graph of bounded degree, there is a node $z$ with $p_z\geq\alpha$.
\end{lemma}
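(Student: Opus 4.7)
The plan is to exploit the fact that every connected outerplanar graph admits a balanced vertex separator of constant size; then a uniformly random pair of endpoints falls on opposite sides of this separator with constant probability, forcing some shortest path between them to cross the separator, so that pigeonhole produces the desired popular vertex.

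Since $U$ is self-contained, every shortest path in $G$ between vertices of $U$ stays inside $U$, so I may work with the induced subgraph $G[U]$, which is connected and outerplanar. Outerplanar graphs have treewidth at most $2$, so there exists a set $S \subseteq U$ with $|S| \le 3$ such that every connected component of $G[U]\setminus S$ has size at most $\beta_0 |U|$, for some absolute constant $\beta_0 < 1$ (one may take $\beta_0 = 2/3$). A clean derivation uses the standard balanced-separator theorem for bounded-treewidth graphs; alternatively, one argues directly by taking the centroid of the block-cut tree of $G[U]$ and, inside a 2-connected outerplanar block, the centroid edge of its weak dual tree, whose two endpoints in the primal form the desired separator.

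Let $C_1,\dots,C_k$ be the components of $G[U]\setminus S$, of sizes $n_i$. Since each $n_i \le \beta_0 |U|$,
\[
\sum_i (n_i/|U|)^2 \;\le\; \max_i (n_i/|U|) \cdot \sum_i (n_i/|U|) \;\le\; \beta_0,
\]
and the probability that $a$ or $b$ lies in $S$ is at most $2|S|/|U| = O(1/|U|)$. Hence the probability that $a$ and $b$ lie in distinct components of $G[U]\setminus S$ is at least $1 - \beta_0 - O(1/|U|)$, and for any such pair every $a$--$b$ path in $G[U]$---in particular every shortest one---must pass through a vertex of $S$. A union bound gives $\sum_{z\in S} p_z \ge 1 - \beta_0 - O(1/|U|)$, and pigeonhole over $|S|\le 3$ produces some $z \in S$ with $p_z \ge (1-\beta_0)/|S| - O(1/|U|)$. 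For $|U|$ above a fixed threshold this exceeds, say, $1/10$; for $|U|$ below the threshold the trivial bound $p_z \ge 1/|U|^2$ is itself a positive constant. Taking $\alpha$ to be the minimum of these two constants yields the lemma.

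The main obstacle is the balanced-separator claim itself. I would either cite the standard treewidth-$2$ separator theorem or provide a short self-contained auxiliary lemma via the dual-tree centroid of a 2-connected outerplanar block combined with centroid decomposition of the block-cut tree; once this is in hand the remainder of the argument is a routine counting estimate, and notably the bound depends neither on $\Delta$ nor on the size of the graph.
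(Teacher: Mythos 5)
Your argument is correct, but it takes a genuinely different route from the paper. The paper's proof is a bare-hands case analysis: it first proves a balanced \emph{edge}-separator lemma for bounded-degree trees (Lemma~\ref{tree-separator}), then, for graphs with a cycle, takes a minimum-length cycle, partitions $U$ by that polygon into the pieces $R_1,\dots,R_l,W_1,\dots,W_l$, and distinguishes four cases (all pieces tiny; some $W_i$ of medium size; some $R_j$ of medium size; one huge piece, into which it recurses), extracting in each case one or two polygon vertices through which a constant fraction of endpoint pairs must route. You instead invoke a constant-size balanced \emph{vertex} separator $S$ (via treewidth $\le 2$), observe that a uniformly random pair lies on opposite sides of $S$ with probability at least $1-\beta_0-O(1/|U|)$, and finish by a union bound and pigeonhole over $|S|\le 3$. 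Both proofs are sound; the separator step you lean on is standard and the probabilistic accounting is right (in particular, the event ``every shortest $a$--$b$ path meets $S$'' does imply $\exists z\in S$ on at least one shortest path, which is all the union bound needs, and self-containedness ensures those shortest paths live in $G[U]$). What your approach buys: it is much shorter, it yields a universal constant $\alpha$ independent of $\Delta$ (the paper's constant degrades as $\alpha_0=\frac{1}{2\Delta}(1-\frac{1}{2\Delta})$ because an edge separator in a tree cannot be balanced without a degree bound, whereas a centroid vertex can), and it generalizes verbatim to any hereditary class with constant-size balanced separators, e.g.\ bounded treewidth. What the paper's approach buys: it is self-contained (no appeal to treewidth machinery) and its polygon/partition case analysis reuses exactly the structural subroutines (\textsc{Partition-by-node}, partition by edge, partition by polygon) that the reconstruction algorithm needs anyway. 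If you write yours up, do state and either cite or prove the separator lemma precisely rather than relying on the sketched dual-tree/block-cut-tree construction, since that is the only non-routine ingredient.
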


\begin{lemma}
\label{sample accuracy}
Let $\omega=C\cdot\log |U|$ (for some constant $C$ to be chosen in the proof). Take a sample of $2\omega$ nodes uniformly and independently at random from $U$. Let them be $a_1,\dots, a_\omega, b_1, \dots, b_\omega$. For every $v\in U$, let $\widehat{p}_v$ be the percentage of pairs $(a_i,b_i)_{1\leq i\leq \omega}$ such that $v$ is in some shortest path between $a_i$ and $b_i$. Let $x$ be some node in $U$ with the largest $\widehat{p}_x$. Then with probability at least $2/3$, we have $p_x>\alpha/2$, where $\alpha>0$ is the constant in Lemma~\ref{most popular node}.
\end{lemma}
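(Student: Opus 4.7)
My plan is to establish Lemma~\ref{sample accuracy} as a standard Chernoff plus union-bound argument, exploiting the fact that the $\omega$ sampled pairs $(a_i,b_i)$ are i.i.d.\ and that for a fixed vertex $v$, the indicator $\mathbf{1}[v \text{ lies on some shortest path between } a_i \text{ and } b_i]$ is, by the very definition of $p_v$, a Bernoulli random variable with mean exactly $p_v$. Hence $\omega\cdot\widehat{p}_v$ is a sum of $\omega$ independent Bernoulli$(p_v)$ variables.

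Concretely, I would fix a threshold $\tau = 3\alpha/4$ (midway between $\alpha$ and $\alpha/2$) and prove two one-sided concentration statements. First, by Lemma~\ref{most popular node} there exists $z\in U$ with $p_z\geq\alpha$; a multiplicative Chernoff lower tail gives
\[
  \Proba{\widehat{p}_z < \tau} \leq \exp(-c_1\,\omega)
\]
for a constant $c_1=c_1(\alpha)>0$. Second, for any single vertex $v$ with $p_v\leq\alpha/2$, a multiplicative Chernoff upper tail gives
\[
  \Proba{\widehat{p}_v \geq \tau} \leq \exp(-c_2\,\omega)
\]
for a constant $c_2=c_2(\alpha)>0$. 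Taking a union bound over all at most $|U|$ vertices $v$ with $p_v\leq\alpha/2$, together with the single event for $z$, the total failure probability is at most $(|U|+1)\exp(-\min(c_1,c_2)\,\omega)$. Choosing the constant $C$ in $\omega = C\log|U|$ large enough (depending only on $\alpha$) makes this at most $1/3$.

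Conditioned on this good event, the empirical maximizer $x$ satisfies $\widehat{p}_x\geq\widehat{p}_z\geq\tau$, while no vertex with $p_v\leq\alpha/2$ achieves $\widehat{p}_v\geq\tau$; hence $p_x>\alpha/2$, which is exactly the conclusion. There is essentially no hard step here: both concentration inequalities are textbook, the trials are genuinely independent across $i$ because the pairs $(a_i,b_i)$ are sampled independently, and the union bound absorbs the $|U|$ competing vertices precisely because we allowed $\omega$ to grow logarithmically in $|U|$. The only minor subtlety to check is that the event ``$v$ lies on some shortest path from $a_i$ to $b_i$'' is indeed measurable and has the claimed probability $p_v$, which is immediate from the definition of $p_v$ stated just before the lemma.
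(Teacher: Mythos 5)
Your proof is correct and follows essentially the same route as the paper's: identify the node $z$ with $p_z\geq\alpha$ from Lemma~\ref{most popular node}, show by concentration that with probability at least $2/3$ no vertex with $p_v\leq\alpha/2$ can empirically outscore $z$, and union-bound over the $|U|$ candidates. The only (immaterial) difference is that you use a fixed threshold $\tau=3\alpha/4$ with two one-sided Chernoff bounds, whereas the paper applies Hoeffding directly to the differences $Y_i-Z_i$.
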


Now we will prove Proposition~\ref{failure probability}.
By Lemma~\ref{lemma:self-contained}, we only need to bound the probability of existence of $\beta$-bad set.
Let $C$ be the constant chosen in Lemma~\ref{sample accuracy}.
Let $x$ be the node computed from the sampling $(a_1,\dots,a_\omega,b_1,\dots,b_\omega)$ in Step 1 of Algorithm \textsc{Balanced-Partition}$(U)$.
Take $\beta=\sqrt{1-\alpha/2}$, where the constant $\alpha\in (0,1)$ is provided by Lemma~\ref{most popular node}. Then $\beta\in (0.7,1)$.
Suppose there exists a $\beta$-bad set $S$.
For every $(a,b)\in S\times S$, any shortest path between $a$ and $b$ cannot go by $x$, since $S$ is self-contained. So $p_x\leq 1-(|S|/|U|)^2\leq 1-\beta^2= \alpha/2$.
By Lemma~\ref{sample accuracy}, the probability that $p_x\leq \alpha/2$ is at most $1/3$.
So the probability of existence of $\beta$-bad set is at most $1/3$.
Thus we complete the proof.

\section{Approximate Reconstruction on General Graphs}
\label{sec:approx reconstruction}
In this section, we study the approximate version of the metric reconstruction problem.
We first give an algorithm for the approximate reconstruction, and then show that this algorithm is near-optimal by providing a query lower bound which coincides with its query complexity up to a logarithmic factor.

\begin{definition}
Let $f$ be any sublinear function of $n$. An \emph{$f$-approximation} $\widehat{\delta}$ of the metric $\delta$ is such that, for every $(u,v)\in V^2$, $\widehat{\delta}(u,v)\leq \delta(u,v)\leq f\cdot\widehat{\delta}(u,v).$
\end{definition}

The following algorithm \textsc{Approx-Reconstruction}$(V)$ receives the vertex set $V$ and samples an expected number of $O(n(\log n)/f)$ nodes. For every sampled node $u$, it makes all queries related to $u$ and provides an estimate $\widehat{\delta}(v,w)$ for every $v$ within distance $f/2$ from $u$ and every $w\in V\backslash\{v\}$.

\vspace{3mm}
\begin{algorithm}{Approx-Reconstruction}{V}
 \begin{WHILE}{\widehat{\delta} \text{is not defined on every pair of nodes}}
  u\=\text{a node chosen from $V$ uniformly at random}\\
  \begin{FOR}{\hbox{every } v\in V}
  \CALL{Query}(u,v) \hbox{ and let }\widehat\delta(u,v)\=\delta(u,v).
  \end{FOR}\\
  S_u\=\{v: \delta(u,v)< f/2\}\\
  \begin{FOR}{v\in S_u\setminus\{ u\}}
  	\begin{FOR}{w\in S_u\backslash\{v\}}
		\widehat{\delta}(v,w)\=1
		\end{FOR}\\
	\begin{FOR}{w\notin S_u}
		\widehat{\delta}(v,w)\=  \delta(u,w)-\delta(u,v)
		\end{FOR}
 \end{FOR}
 \end{WHILE}\\
 \RETURN \widehat{\delta}
\end{algorithm}

\begin{theorem}
The randomized algorithm \textsc{Approx-Reconstruction}$(V)$ computes an $f$-approximation $\widehat \delta$ of the metric $\delta$ using $O(n^2 (\log n)/f)$ queries.
\end{theorem}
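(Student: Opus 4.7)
The plan is to verify correctness and the query bound separately.

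\textbf{Correctness.} The algorithm sets $\widehat\delta(v,w)$ through three types of assignments, and for each I would check the $f$-approximation inequalities $\widehat\delta(v,w)\le\delta(v,w)\le f\cdot\widehat\delta(v,w)$. (i) Whenever $u$ is sampled, $\widehat\delta(u,v)=\delta(u,v)$ is exact. (ii) For $v,w\in S_u\setminus\{u\}$, the assignment $\widehat\delta(v,w)=1$ satisfies the lower bound because the graph is unweighted and $v\ne w$, and it satisfies the upper bound by the triangle inequality $\delta(v,w)\le\delta(v,u)+\delta(u,w)<f/2+f/2=f$. (iii) For $v\in S_u$, $w\notin S_u$, the assignment $\widehat\delta(v,w)=\delta(u,w)-\delta(u,v)$ satisfies the lower bound by the triangle inequality $\delta(u,w)\le\delta(u,v)+\delta(v,w)$; for the upper bound I would begin with $\delta(v,w)\le\delta(u,v)+\delta(u,w)$ and reduce to proving $(f+1)\delta(u,v)\le(f-1)\delta(u,w)$, which follows from $\delta(u,v)<f/2\le\delta(u,w)$ together with the integrality of distances in unweighted graphs (a brief case analysis for $f$ even vs.\ odd).

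\textbf{Query complexity.} Each iteration of the while loop performs exactly $n$ queries, so it suffices to bound the expected number $K$ of iterations by $O((n\log n)/f)$. I would observe that the loop certainly terminates once every vertex $v\in V$ has been included in some $S_{u_j}$, equivalently once some sampled vertex has fallen in the ball $B_v=\{u:\delta(u,v)<f/2\}$ for every $v$. The key combinatorial ingredient is the lower bound $|B_v|\ge\min(\lceil f/2\rceil,n)$, which I would establish by noting that in a connected unweighted graph a BFS from $v$ produces at least one vertex at every integer distance between $0$ and the eccentricity of $v$, so the ball of radius $r$ around $v$ contains at least $\min(r+1,n)$ vertices. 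Consequently, a fresh uniform sample covers a fixed $v$ with probability at least $|B_v|/n=\Omega(f/n)$, and a standard coupon-collector / union-bound argument yields $\mathbb{E}[K]=O((n\log n)/f)$. Multiplying by the $n$ queries per iteration gives the claimed $O(n^2(\log n)/f)$ bound.

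The main (minor) subtlety I anticipate is the tightness of case (iii) in the correctness argument: the ratio $\delta(v,w)/\widehat\delta(v,w)$ can equal exactly $f$ in the worst case (attained when $f$ is odd and the radii $\delta(u,v),\delta(u,w)$ straddle $f/2$ by $1/2$), so the argument must exploit the integrality of shortest-path distances rather than lean on generic real-valued inequalities. Everything else reduces to routine applications of the triangle inequality and a coupon-collector-style analysis.
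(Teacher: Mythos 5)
Your proof is correct and follows essentially the same route as the paper's: the same two-case correctness analysis (your reduction to $(f+1)\delta(u,v)\le(f-1)\delta(u,w)$ is algebraically the same as the paper's bound $2\delta(u,v)\le f-1\le (f-1)\widehat\delta(v,w)$, both relying on integrality of distances and, implicitly, integrality of $f$), and the same counting argument for the query bound, namely that connectivity forces $|\{u: v\in S_u\}|\ge f/2$ followed by a coupon-collector/union-bound computation of the expected number of sampled centers.
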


\begin{proof}
\renewcommand\endproof{\doendproof}
First we prove that for every $(v,w)$, we have $\widehat{\delta}(v,w)\leq \delta(v,w)\leq f\cdot \widehat{\delta}(v,w)$.
There are two cases:

{Case 1}: $w\in S_u\backslash\{v\}$ (line 7). Then $\widehat{\delta}(v,w)=1\leq \delta(v,w)\leq \delta(u,v)+\delta(u,w)<(f/2)+(f/2)= f=f \cdot \widehat{\delta}(v,w)$, because $v$ and $w$ are in $S_u$.

{Case 2}: $w\notin S_u$ (line 9). On the one hand, by the triangular inequality,
$\delta(v,w)\geq \delta(w,u)-\delta(v,u)= \widehat\delta(v,w)$.
On the other hand, by the triangular inequality, $\delta(w,v)  \leq  (\delta(u,w)-\delta(u,v)) + 2\delta(u,v)$. The first tem is $ \widehat\delta(v,w)$. The second term, by definition of $S_u$, is at most $f-1$. Since $v\in S_u$ and $w\notin S_u$, we have $\delta(u,w)-\delta(v,w)\geq 1$, so the second term can be bounded by $f-1\leq (f-1)\cdot\widehat\delta(v,w)$. Adding completes the proof of the upper bound.

Next, we analyze the query complexity of the algorithm. Since $G$ is connected, for every node $v$ there are at least $f/2$ points  $u$ such that $v\in S_u$. Let $X$ denote all samples during the algorithm. The number of queries is $n|X|$, and its expectation is $n\sum_t \Pr [|X|>t]$. Let $X_t$ denote the first $t$ samples chosen. We have:
$$\Pr [|X|>t]=\Pr[\exists v, \forall u\in X_t, v\notin S_u ]\leq
\left\{ \begin{array}{l}
1  \hbox{ \qquad \qquad \qquad \qquad if } t<2n(\ln n )/f  \\
 \sum_v \Pr[\forall u\in X_t, v\notin S_u]  \hbox{ \quad otherwise.}
\end{array} \right. $$
By independence, $\Pr[\forall u\in X_t, v\notin S_u]\leq (1-(f/2)/n)^t\leq e^{-tf/(2n)}$. Thus
\[E [\# \hbox{queries}]\leq n \frac{2n\ln n}{f}+n^2 \frac{(1/n)}{f/(4n)} = O(n^2(\log n)/f). \tag*{\qed}\]
\end{proof}

On the lower bound side, Reyzin and Srivastava proved a tight $\Omega(n^2)$ bound for the exact reconstruction problem, as in the following proposition.
\begin{proposition}\label{prop:exact-lower-bound}\cite{Reyzin2007}
\label{exact reconstruction lower bound}
Any deterministic or randomized algorithm for the exact graph reconstruction problem requires $\Omega(n^2)$ queries.
\end{proposition}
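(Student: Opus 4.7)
The plan is to apply Yao's minimax principle combined with a direct information-theoretic (counting) argument on the decision tree. First, I would construct a family $\mathcal{G}$ of connected $n$-vertex graphs for which (i) $|\mathcal{G}|=2^{\Omega(n^2)}$ and (ii) every pairwise distance lies in $\{1,2\}$, so that each query returns only one bit. The cleanest construction is to fix one designated vertex $v_0$ adjacent to all other vertices, and then let each of the $\binom{n-1}{2}$ remaining potential edges be independently present or absent. Every resulting graph is connected with diameter at most $2$, and for any pair $(u,v)$ with $u,v\neq v_0$, $\delta(u,v)=1$ if $(u,v)$ is an edge and $\delta(u,v)=2$ (via $v_0$) otherwise; distances involving $v_0$ are always $1$. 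Hence $|\mathcal{G}|=2^{\binom{n-1}{2}}$, distinct graphs in $\mathcal{G}$ induce distinct metrics, and each query reveals exactly the presence or absence of one specific potential edge.

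Second, I would analyze a generic deterministic algorithm $\mathcal{D}$ that makes at most $q$ queries on every input. Its decision tree has depth $\leq q$ and out-degree $\leq 2$ (since each query returns one of two values), so $\mathcal{D}$ produces at most $2^q$ distinct outputs. Because distinct graphs in $\mathcal{G}$ have distinct metrics, $\mathcal{D}$ can correctly identify at most $2^q$ graphs of $\mathcal{G}$. Placing the uniform distribution on $\mathcal{G}$, any $\mathcal{D}$ that is correct on a fraction $\geq 2/3$ of inputs must satisfy $2^q \geq \tfrac{2}{3}|\mathcal{G}|$, which gives $q \geq \binom{n-1}{2}-O(1)=\Omega(n^2)$.

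Third, the lift to randomized algorithms is standard via Yao's minimax principle: if a randomized algorithm has expected query complexity $T$ and succeeds with probability $\geq 2/3$ on every input, then some deterministic algorithm has expected query complexity $\leq T$ on the uniform distribution over $\mathcal{G}$ and success probability $\geq 2/3$ over this distribution. Truncating it to stop after $3T$ queries loses at most probability $1/3$ by Markov's inequality, yielding a deterministic algorithm that always makes at most $3T$ queries and is correct on a fraction $\geq 1/3$ of $\mathcal{G}$; the counting bound from the previous paragraph then forces $T=\Omega(n^2)$.

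The main obstacle is really just choosing the family $\mathcal{G}$ so that each query is information-theoretically worth only $O(1)$ bits while $\log_2|\mathcal{G}|$ is as large as possible; once the ``universal vertex'' construction above is in hand the rest is routine counting plus standard Yao/Markov boilerplate. No subtler adversary argument is needed because, in this regime, the information-theoretic bound is already tight.
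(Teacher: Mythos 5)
Your proof is correct, but note that the paper does not actually prove this proposition: it is quoted from Reyzin and Srivastava~\cite{Reyzin2007}, and the only lower-bound argument written out in the paper is the one for the \emph{approximate} version in \apref{appendix:approx-lower-bound}. Your universal-vertex family (a fixed vertex $v_0$ joined to everything, plus an arbitrary subset of the remaining $\binom{n-1}{2}$ edges) together with the leaf-counting bound $2^q\geq \frac{2}{3}\,2^{\binom{n-1}{2}}$ is a clean, self-contained route to the exact bound; the paper's appendix instead builds depth-$2f$ trees indexed by $f$-tuples of permutations and counts root--leaf paths with a bounded number of ``Yes'' answers. The difference is not cosmetic: your family has diameter $2$, so the constant metric $\widehat{\delta}\equiv 1$ is already a $2$-approximation of every graph in it, and the construction therefore gives nothing for $f$-approximate reconstruction; the paper's tree construction is designed precisely to survive $f$-approximation, at the cost of a more involved counting argument (and it yields only $\Omega(n^2/f)$, consistent with your $\Omega(n^2)$ at $f=O(1)$). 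One small ordering issue in your Yao step: averaging over the algorithm's coins gives a deterministic algorithm achieving the average of a \emph{single} quantity, so you should truncate the randomized algorithm at $3T$ queries \emph{first} (losing $1/3$ success probability on every input by Markov) and only then fix the coins; as written you claim a deterministic algorithm simultaneously inheriting both the expected cost and the success probability, which does not follow directly from averaging. Since you perform the truncation anyway in the next clause, this is a matter of presentation rather than a gap.
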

We extend the proof of Proposition~\ref{prop:exact-lower-bound} to get a lower bound for approximate reconstruction as in Theorem~\ref{prop:approx-lower-bound}, whose proof is in \apref{appendix:approx-lower-bound}.

\begin{theorem}\label{prop:approx-lower-bound}
Any deterministic or randomized approximation algorithm requires $\Omega(n^2/f)$ queries to compute an $f$-approximation of the graph metric.
\end{theorem}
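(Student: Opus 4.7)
My plan is to use Yao's minimax principle (equivalently, an adversary argument) to extend the $\Omega(n^2)$ lower bound of Proposition~\ref{prop:exact-lower-bound} to the approximate setting. It suffices to exhibit a distribution $\mu$ over $n$-vertex graphs and show that every deterministic algorithm making fewer than $c n^2/f$ queries fails, with constant probability, to return a valid $f$-approximation of the sampled metric; the theorem then follows by Yao's principle.

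For the construction, the guiding idea is to ``stretch'' the Reyzin--Srivastava hard instance so that the gap between two possible values of a single distance exceeds a factor of $f$, while simultaneously embedding enough independent hidden bits to force $\Omega(n^2/f)$ queries. Concretely, each graph drawn from $\mu$ is obtained from a fixed base graph $G_0$ of diameter $\Theta(f)$ by independently including, with probability $1/2$, each edge from a predetermined family $\mathcal{E}$ of ``candidate shortcut'' edges, with $|\mathcal{E}| = \Theta(n^2/f)$. Each $e=(u,v) \in \mathcal{E}$ is chosen so that $\delta_{G_0}(u,v) > f$; then adding $e$ drops $\delta(u,v)$ from more than $f$ down to $1$, and since $1$ and a value greater than $f$ cannot share an $f$-approximant, any valid $\widehat{\delta}$ must correctly identify the status of every $e \in \mathcal{E}$.

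The adversary strategy on each query $(u,v)$ is to answer with $\delta_{G_0}(u,v)$. I would design $G_0$ together with $\mathcal{E}$ so that each candidate shortcut affects only a bounded number of distances in every augmented graph: a natural attempt is to partition the vertices into gadgets glued together through single cut vertices, with each candidate shortcut confined inside its gadget so that no shortest path between two vertices in different gadgets is affected. Under such a design every query reveals only $O(1)$ bits of information about the random subset of active shortcuts, so after $o(n^2/f)$ queries there remain two graphs in the support of $\mu$ consistent with all answers but differing on the $f$-approximate distance of some unqueried pair in $\mathcal{E}$. The algorithm is then forced to be incorrect on at least one of them.

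The main obstacle I expect to grapple with is the simultaneous realization of three properties by $(G_0,\mathcal{E})$: (i) $|\mathcal{E}| = \Omega(n^2/f)$, (ii) every candidate shortcut connects a pair of vertices at base-distance exceeding $f$, and (iii) the bits $\{\mathbf{1}[e \in E(G)] : e \in \mathcal{E}\}$ are \emph{independent} in the sense that each distance in any augmented graph depends on only $O(1)$ of them. Naive geometries -- a single long path with all long-range shortcuts as candidates, or a collection of small path-gadgets -- either produce far too few candidates or introduce chained-shortcut interactions that let a single query leak many bits. The technical heart of the proof is finding a gadget geometry that packs $\Omega(n^2/f)$ truly isolated hidden bits into $n$ vertices while keeping base distances $\Theta(f)$.
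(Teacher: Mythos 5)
Your overall strategy (Yao's principle plus a hard distribution) matches the paper's, but the proposal stops exactly where the proof has to start: the pair $(G_0,\mathcal{E})$ is never constructed, and there is a quantitative reason why it cannot be constructed with the parameters you need. For your reduction to work --- i.e., for an $f$-approximation to be forced to reveal the status of every candidate shortcut $e'=(u',v')$ --- the two admissible values of $\delta(u',v')$ must differ by more than a factor $f$ in \emph{every} graph of the support. If some other candidate $e=(u,v)$ may be present and $\delta_{G_0}(u',u)+1+\delta_{G_0}(v,v')\le f$, then the graph with $e$ present and $e'$ absent already has $\delta(u',v')\le f$, so $\widehat{\delta}(u',v')=1$ is a valid approximate answer whether or not $e'$ is present, and the bit of $e'$ is not forced. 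Hence any two candidates must be at product distance roughly $\ge f$ in $V\times V$. Balls of product radius $f/4$ contain $\Omega(f^2)$ pairs each (a ball of radius $r<n$ in a connected graph has at least $r+1$ vertices), so a packing argument caps $|\mathcal{E}|$ at $O(n^2/f^2)$ --- a factor $f$ short of what you need. This is the precise form of the obstacle you flag in your last paragraph; it is not an artifact of ``naive geometries'' but holds for every $(G_0,\mathcal{E})$ within your framework of independent, individually forced bits.

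The paper gets around this by abandoning bit-independence. Its hard instances are trees with $k=\Theta(n/f)$ branches of length $2f$ whose lower halves are wired by $f$ hidden permutations of $S_k$; an $f$-approximation still determines the tree exactly, but each query now answers only the Yes/No question ``are $u$ and $v$ in the same branch?''. The instance has only $\Theta(n\log (n/f))$ bits of entropy, so no per-query information count can reach $\Omega(n^2/f)$; instead the paper shows that the Yes-answers form a forest and hence number at most $kf$ on any root-to-leaf path of the decision tree, and a binomial count of the leaves reachable with so few Yes's within depth $h=\Theta(k^2f)$ yields the bound. To salvage your route you would need to import some analogue of this Yes/No asymmetry rather than rely on independent hidden edges.
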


\section*{Acknowledgments}
We would like to thank Fabrice Ben Hamouda for his helpful comments.

\bibliographystyle{abbrv}

\bibliography{reference}
%
%
%

\ifnum\fullversion=1

\appendix

\clearpage
\section{Implementation of \textsc{Balanced-Partition} Algorithm}
\begin{definition}
Define a \emph{boundary cycle} $(b_1, \dots ,b_t)$ to be a cycle of vertices along the whole boundary of the unbounded face, where $b_i$ and $b_{i+1}$ are connected by an edge for every $i\in [1,t]$ with $b_{t+1}=b_1$.
\end{definition}
Notice that a boundary cycle may contain several occurrences of the same vertex. For example, a boundary cycle of the graph in Figure~\ref{example of an outerplanar graph} may be $(a,c,d,e,f,c,b)$.

Let $n_v$ be the number of connected components in $G[U]$ after the removal of the vertex $v$.
It is easy to see that $n_v$ is the number of occurrences of $v$ in any boundary cycle of $G[U]$.
The node $v$ is said to be a \emph{cut vertex} in $G[U]$ when $n_v\geq 2$.

\tikzstyle{vertex}  =[shape=circle , minimum size=20pt, draw=black]
\begin{figure}
\centering
\scalebox{0.7}{
\begin{tikzpicture}[node distance=50pt, auto]
    \node[vertex] (a){$a$};
    \node[vertex] (b) [below of = a] {$b$};
    \node[vertex] (c) [below right of = a]{$c$};
    \node[vertex] (d) [above right of = c]{$d$};
    \node[vertex] (e) [below right of = d]{$e$};
    \node[vertex] (f) [below right of = c]{$f$};
    \path   (a) edge (b)
            (a) edge (c)
            (b) edge (c)
            (c) edge (d)
            (d) edge (e)
            (e) edge (f)
            (f) edge (c);
\end{tikzpicture}
}
\caption{Example of an outerplanar graph}
\label{example of an outerplanar graph}
\end{figure}
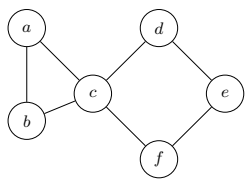

\label{appendix:algo-implement}
\subsection{Find a Shortest Path}
\label{sec:shortest path}
Let $U$ be a self-contained subset of $V$. Let $a$ and $b$ be two nodes in $U$.
The following algorithm \textsc{Shortest-Path}$(a,b,U)$ returns some shortest path between $a$ and $b$ using $O(|U|\log |U|)$ queries.
\vspace{3mm}
\begin{algorithm}{Shortest-Path}{a,b,U}
\begin{IF}{\delta(a,b)=1}
    \RETURN \text{path}(a,b)
\end{IF}\\
\CALL{Query}(U,a)\\
\CALL{Query}(U,b)\\
T\=\{u\in U\mid \delta(u,a)+\delta(u,b)=\delta(a,b)\}\\
c\=\text{any node in $T$ such that $\delta(c,a)=\lfloor\delta(a,b)/2\rfloor$}\\
U_1\=\{u\in T \mid \delta(u,a)<\delta(c,a)\}\\
U_2\=\{u\in T \mid \delta(u,a)>\delta(c,a)\}\\
P_1\=\CALL{Shortest-Path}(a,c,U_1)\\
P_2\=\CALL{Shortest-Path}(c,b,U_2)\\
\RETURN  \text{the concatenation of} P_1 \text{and} P_2
\end{algorithm}

First, we make $2|U|$ queries to get $\delta(u,a)$ and $\delta(u,b)$ for every $u\in U$. The set $T$ is the set of nodes in at least one shortest path between $a$ and $b$. Let $c$ be the node in the middle of some shortest path between $a$ and $b$. Then we construct recursively a shortest path between $a$ and $c$ and a shortest path between $c$ and $b$. The concatenation of these two paths is a shortest path between $a$ and $b$.

During the recursion, the distance between the two given endpoints is reduced to half at every level, so there are at most $O(\log |U|)$ levels of recursion. The number of queries at every level $O(|U|)$, since the sets in the same level of recursion are disjoint. So the total query complexity is $O(|U|\log |U|)$.

\subsection{Partition by Node}
\label{sec:partition by node}
Let $U$ be a self-contained subset of $V$ and let $x$ be a node in $U$. The set $U$ is separated into $n_x$ subsets $S_{x,1},\dots,S_{x,n_x}$ after the removal of $x$. Recall that $\{S_{x,1}^*,\dots,S_{x,n_x}^*\}$ is the partition of $U$ by $x$, where $S_{x,i}^*=S_{x,i}\cup\{x\}$.

The following algorithm \textsc{Partition-by-node}$(x,U)$ computes this partition using $O(\Delta\cdot|U|)$ queries.
In the algorithm, $Y$ is the set of neighbors of $x$. For any $y_1$ and $y_2$ in $Y$, we say that they are \emph{consecutive neighbors} if there exists a path between $y_1$ and $y_2$ that does not go by nodes in $Y\backslash\{y_1,y_2\}$. Two neighbors $y$ and $z$ are in the same $S_{x,i}$ (for some $i\in[1,n_x]$) if and only if there exists $v_1,\dots,v_k$ such that $v_1=y, v_k=z$, and every $(v_i,v_{i+1})_{1\leq i<k}$ is a pair of consecutive neighbors. The algorithm maintains a Disjoint-set data structure with $Union$ operation on consecutive neighbors. This leads to the partition $\{Y_1,\cdots Y_{n_x}\}$ of $Y$, which is the same as $\{S_{x,1}^*\cap Y,\dots,S_{x,n_x}^*\cap Y\}$.
We then classify  every $u\in U\backslash\{x\}$  into some $S_{x,i}^*$, according to which $Y_i$ contains the nodes nearest to $u$.

\vspace{3mm}
\begin{algorithm}{Partition-by-node}{x, U}
\CALL{Query}(U,x)\\
Y\=\{u\in U\mid \delta(x,u)=1\}\\
\CALL{Query}(U,Y)\\
\begin{FOR}{u\in U}
    d_u\=\min_{y\in Y}\{\delta(u,y)\}\\
    A_u\=\{y\in Y\mid \delta(u,y)=d_u\}\\
    \begin{IF}{|A_u|=2}
        (a_1,a_2)\=\text{the only two elements in $A_u$}\\
        Union(a_1,a_2)
    \end{IF}\\
    \begin{IF}{|A_u|=1}
        a\=\text{the only element in $A_u$}\\
        \begin{FOR}{y\in Y}
            \begin{IF}{\delta(u,y)=d_u+1}
                  Union(a,y)
            \end{IF}
        \end{FOR}
    \end{IF}
\end{FOR}\\
\{Y_1,\dots,Y_{n_x}\}\=\text{partition of $Y$ by the Disjoint-set data structure}\\
\begin{FOR}{i\=1 \TO n_x}
S_{x,i}^*\=\{u\in U\backslash\{x\}\mid A_u\subseteq Y_i\}
\end{FOR}\\
\RETURN \{S_{x,1}^*\cup \{x\}, \dots, S_{x,n_x}^*\cup \{x\}\}
\end{algorithm}

We will show that the partition $\{Y_1,\dots,Y_{n_x}\}$ in line 15 is indeed the partition $\{S_{x,1}^*\cap Y,\dots,S_{x,n_x}^*\cap Y\}$.
Consider any consecutive neighbors $y_1$ and $y_2$.
Let $v_1,\dots,v_l$ be a shortest path between $y_1$ and $y_2$ and let $u=v_{\lceil l/2\rceil}$.
When $l$ is odd, $\delta(u,y_1)=\delta(u,y_2)=d_v$, so $(y_1,y_2)$ is added to the Disjoint-set in line 9;
when $l$ is even, $\delta(u,y_1)=d_u$ and $\delta(u,y_2)=d_u+1$, so $(y_1,y_2)$ is added to the Disjoint-set in line 14.
Thus every pair of consecutive neighbors is added to the Disjoint-set.
For any $y\in Y$ and $z\in Y$ such that $y$ and $z$ are in the same $S_{x,i}^*$, there exists a path $v_1,\dots,v_k$ with $v_1=y$ and $v_k=z$, such that every $(v_i,v_{i+1})$ is a pair of consecutive neighbors and thus added to the Disjoint-set. So $y$ and $z$ are in the same partition.
On the other hand, it is easy to see that any node $y$ and node $z$ in $Union(y,z)$ are in the same $S_{x,i}^*$ for some $i\in [1,n_x]$.
So the partition of $Y$ in the algorithm is indeed $\{S_{x,1}^*\cap Y,\dots,S_{x,n_x}^*\cap Y\}$.

From this partition, we can construct $S_{x,i}^*$ easily (lines 16--17) by notifying that every $u\in S_{x,i}^*$ satisfies that $\delta(u,Y_i)<\delta(u,Y_j)$ for every $j\neq i$.
\subsection{Ordering of Neighbors}
\label{sec:Ordering of neighbors}
Let $U$ be a self-contained subset of $V$ and let $x$ be a non-cut vertex in $G[U]$. Let $Y$ be the set of neighbors of $x$ in $U$. Since $U$ is outerplanar, there exists an order $y_1,\dots,y_\lambda$ of the elements in $Y$ such that every $(y_i,y_{i+1})_{1\leq i<\lambda}$ is a pair of consecutive neighbors. Such order is unique up to symmetry, i.e., the only other order being $(y_\lambda,\dots,y_1)$. In fact, for every $y_i$ and $y_j$ under the above order with $|i-j|>1$, $y_i$ and $y_j$ are not consecutive neighbors, since otherwise $G[U]$ contains a subgraph homeomorphic from $K_4$, which contradicts the fact that $G[U]$ is outerplanar.

The algorithm \textsc{Neighbors-In-Order}$(x,U)$ makes $O(\Delta\cdot|U|)$ queries and returns the $\lambda$ neighbors $(y_1,\dots,y_\lambda)$ in order.

\vspace{3mm}
\begin{algorithm}{Neighbors-In-Order}{x, U}
\CALL{Query}(U,x)\\
Y\=\{u\in U\mid \delta(x,u)=1\}\\
\CALL{Query}(U,Y)\\
R=\emptyset\\
\begin{FOR}{u\in U}
    d_u\=\min_{y\in Y}\{\delta(u,y)\}\\
    A_u\=\{y\in Y\mid \delta(u,y)=d_u\}\\
    \begin{IF}{|A_u|=2}
        (a_1,a_2)\=\text{the only two elements in $A_u$}\\
        R\=R\cup \{(a_1,a_2)\}
    \end{IF}\\
    \begin{IF}{|A_u|=1}
        a\=\text{the only element in $A_u$}\\
        \begin{FOR}{y\in Y}
            \begin{IF}{\delta(u,y)=d_u+1}
                R\=R\cup \{(a,y)\}
            \end{IF}
        \end{FOR}
    \end{IF}
\end{FOR}\\
(y_1,\dots,y_\lambda)\=\text{an ordering of $Y$ s.t.\ for every  $1\leq i<\lambda$, $(y_i,y_{i+1})\in R$}\\
\RETURN (y_1,\dots,y_\lambda)
\end{algorithm}

The correctness of the algorithm follows from Proposition~\ref{prop:ordering neighbors}.
\begin{proposition}
\label{prop:ordering neighbors}
A pair $(u,v)$ is in $R$ iff. $u$ and $v$ are consecutive neighbors.
\end{proposition}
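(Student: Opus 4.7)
The proof establishes both directions of the equivalence.

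\textbf{Forward direction} ($R$ implies consecutive). Suppose $(y_1, y_2) \in R$ was inserted on behalf of some witness $w \in U$. Consider first the sub-case $|A_w|=2$ with $A_w = \{y_1, y_2\}$, so $\delta(w, y_1) = \delta(w, y_2) = d_w$. Pick any shortest path $P_1$ in $G[U]$ from $w$ to $y_1$; such a path lies in $U$ by self-containment. Any $y \in Y \cap P_1$ distinct from $y_1$ would satisfy $\delta(w,y) < \delta(w,y_1) = d_w$, contradicting the definition of $d_w$, so $P_1 \cap Y = \{y_1\}$; symmetrically $P_2 \cap Y = \{y_2\}$ for an analogous path $P_2$ from $w$ to $y_2$. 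The walk $P_1^{-1} \cdot P_2$ from $y_1$ to $y_2$ meets $Y$ only at its endpoints; shortcutting repeated vertices yields a simple path, proving consecutiveness. In the sub-case $|A_w|=1$ with, say, $A_w = \{y_1\}$ and $\delta(w, y_2) = d_w + 1$, the same argument applies, with the caveat that $y_1$ may now appear as an internal vertex of $P_2$; if so, the shortest-path identity forces $\delta(y_1, y_2) = 1$, and the single edge $y_1 y_2$ itself witnesses consecutiveness.

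\textbf{Reverse direction} (consecutive implies $R$). Fix consecutive neighbors $y_1, y_2 \in Y$ and let $P = (y_1 = v_0, v_1, \ldots, v_l = y_2)$ be a shortest path in $G[U]$. Take $w$ to be an approximate midpoint of $P$. Provided that (i) $P$ contains no vertex of $Y$ besides its endpoints and (ii) no $y^* \in Y \setminus \{y_1, y_2\}$ satisfies $\delta(w, y^*) \leq \lfloor l/2 \rfloor$, a short case analysis on the parity of $l$ shows that either $A_w = \{y_1, y_2\}$ with $d_w = l/2$ (even case) or $A_w = \{y_1\}$ with $\delta(w, y_2) = d_w + 1$ (odd case), so the algorithm inserts $(y_1, y_2)$ into $R$. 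This mirrors the case analysis already given in the correctness argument for \textsc{Partition-by-node}.

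\textbf{The main obstacle} is establishing (i) and (ii). Both follow from outerplanarity combined with the hypothesis that $x$ is non-cut in $G[U]$. The intuition is that the ordering $y_1, \ldots, y_\lambda$ reflects the cyclic order of $x$'s neighbors in any outerplanar embedding of $G[U]$, and between two consecutive $y_i, y_{i+1}$ the graph contains a ``fan region'' around $x$ that separates them from the remaining neighbors. A shortest $y_i \to y_{i+1}$ path passing through some other $y_j$ (with $j \neq i, i+1$), combined with the edges $x y_i, x y_{i+1}, x y_j$ and a path from $y_i$ to $y_{i+1}$ along the boundary of this region, would yield a subgraph homeomorphic to $K_4$ or $K_{2,3}$, contradicting outerplanarity via the Chartrand--Harary characterization~\cite{Chartrand1967}. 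An analogous planarity argument rules out a hypothetical $y^* \in Y$ closer to the midpoint $w$ than $\lfloor l/2 \rfloor$: any such $y^*$ would yield, together with $P$, the edges at $x$, and boundary arcs, one of the forbidden topological minors. Once (i) and (ii) are secured, the parity case analysis completes the proof.
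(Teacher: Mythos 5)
Your forward direction ($R$ implies consecutive) is a genuinely different argument from the paper's, and arguably cleaner: the paper proves that only consecutive pairs enter $R$ by partitioning $U$ into the regions $V_1,\dots,V_\lambda,T_1,\dots,T_{\lambda-1}$ of Figure~\ref{fig:partition by neighbors} and doing a case analysis on where the witness lies, whereas you argue purely metrically that a shortest $w$--$y_1$ path cannot contain any other member of $Y$ in its interior (such a vertex would be at distance strictly less than $d_w$ from $w$), so concatenating the two shortest paths from the witness exhibits the required $Y$-avoiding connection. This needs no outerplanarity at all, and your treatment of the $|A_w|=1$ sub-case is correct. (One pedantic point: for ``consecutive neighbors'' to be a non-vacuous notion the connecting path must also avoid $x$ itself; your concatenated walk does avoid $x$ by the same distance argument, except in the degenerate case $w=x$, $\lambda=2$, which is harmless.)

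The reverse direction, however, has a real gap. You take $P$ to be a shortest $y_1$--$y_2$ path \emph{in $G[U]$} and claim that conditions (i) and (ii) for its midpoint follow from outerplanarity plus $x$ being non-cut. But $y_1$ and $y_2$ are both adjacent to $x$, so $\delta(y_1,y_2)\le 2$, and when $(y_1,y_2)$ is not an edge the shortest path is typically $y_1,x,y_2$: its midpoint is $w=x$, for which $A_w=Y$, and condition (ii) fails whenever $\lambda\ge 3$ --- even though the graph is outerplanar and $x$ is non-cut (take a large induced face $x,y_1,v_1,\dots,v_m,y_2$). The proposition remains true, but the witness must be the (approximate) midpoint of the shortest $y_1$--$y_2$ path \emph{avoiding $x$}, i.e., the boundary path of the face containing $x$, $y_1$, $y_2$; the substantive step, where outerplanarity ($K_{2,3}$-freeness, hence uniqueness of that face path) actually enters, is to show that for this $w$ the true distances $\delta(w,y_1)$ and $\delta(w,y_2)$ are realized along that path and are strictly smaller than $\delta(w,y^*)$ for every other $y^*\in Y$. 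The paper's own treatment (deferring to the midpoint argument of \textsc{Partition-by-node}) shares this imprecision, but your explicit reduction to (i)--(ii) for the $G[U]$-shortest path is the version that demonstrably fails, so this step needs to be repaired rather than merely elaborated.
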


\begin{proof}
The same argument as in the last section shows that if $u$ and $v$ are consecutive neighbors, then $(u,v)\in R$.
So we only need to prove the other direction: during the for loop over every $u\in U$ (lines 5--15), the only pairs added to $R$ are consecutive neighbors.

For every $i\in[1,\lambda]$, let $\{S_{y_i,1},\dots,S_{y_i,n_{y_i}}\}$ be the partition of $U$ by $y_i$. Let $k_i\in [1,n_{y_i}]$ be such that $x\in S_{y_i,k_i}$ and let $V_i=U\backslash S_{y_i,k_i}^*$.
Define $T=S_{y_1,k_1}\cap\cdots\cap S_{y_{\lambda},k_\lambda}$.
Separate $T$ into $\lambda-1$ subsets: $T_1$,\dots,$T_{\lambda-1}$, such that $T_i$ contains nodes between $(x,y_i)$ and $(x,y_{i+1})$
(See Section~\ref{sec:partition by edge} for the definition of the partition by an edge).
Then $\{V_1,\dots,V_\lambda,T_1,\dots,T_{\lambda-1}\}$ is a partition of $U$ as in Figure~\ref{fig:partition by neighbors}~\footnote{The set $V_0$ in Figure~\ref{fig:partition by neighbors} does not exists here since $x$ is a non-cut node in $U$}.
It is sufficient to show that, during the For loop over every $u\in V_i$ ($1\leq i\leq \lambda$) and over every $u\in T_i$ ($1\leq i<\lambda$), the only pairs added to $R$ are consecutive neighbors.

Case 1:  $u$ is in some $V_i$ ($1\leq i\leq \lambda$).
Then $y_i$ is the only node in $Y$ with $\delta(u,y_i)=d_u$.
For every $y_j\in Y$ with $\delta(y_j,u)=d_u+1$, there must be an edge $(y_i,y_j)$ in the graph, thus $(y_i,y_j)$ are consecutive neighbors.

Case 2: $u$ is in some $T_i$ ($1\leq i<\lambda$).
If $\delta(u,y_i)=\delta(u,y_{i+1})=d_u$, we have $\delta(u,y_j)>d_u$ for every $y_j\in Y\backslash\{y_i,y_{i+1}\}$, so the pair $(y_i,y_{i+1})$ of consecutive neighbors is the only pair added to $R$.
Otherwise $\delta(u,y_i)\neq \delta(u,y_{i+1})$ and $\min\{\delta(x,y_i),\delta(u,y_{i+1})\}=d_u$.
Assume $\delta(u,y_i)=d_u$ without loss of generality. For every $y_j\in Y$ such that $\delta(u,y_j)=d_u+1$, either $y_j=y_{i+1}$ or there is an edge $(y_i,y_j)$ in the graph.
In both cases, $y_i$ and $y_j$ are consecutive neighbors.
\end{proof}

\subsection{Partition by Edge}
\label{sec:partition by edge}
Let $U$ be a self-contained subset of $V$. Let $x$ and $y$ be nodes in $U$ such that $(x,y)$ is an edge in $G$, thus also an edge in $G[U]$.
Let $\{S_{x,1},\dots,S_{x,n_x}\}$ be the partition of $U$ by the node $x$ and $\{S_{y,1},\dots,S_{y,n_y}\}$ be the partition of $U$ by the node $y$, which can be computed by the algorithm in Section~\ref{sec:partition by node} using $O(\Delta\cdot|U|)$ queries.
For every $S_{x,i}^*$ $(1\leq i\leq n_x)$ not containing $y$, remove from $U$ all its elements; and for every $S_{y,i}^*$ $(1\leq i\leq n_y)$  not containing $x$, remove from $U$ all its elements.
Now $x$ and $y$ are non-cut vertices in $G[U]$.

Consider any boundary cycle of $G[U]$. Both $x$ and $y$ appear exactly once in this cycle, so they separate it into two segments. Define $A^*$ and $B^*$ to be the sets of nodes in the two segments respectively (excluding $x$ and $y$). Let $A=A^*\cup\{x,y\}$ and $B=B^*\cup\{x,y\}$.
It is easy to see that every node in $V\backslash\{x,y\}$ belongs to either $A^*$ or $B^*$; and that for every $a\in A^*$ and $b\in B^*$, there is no edge between $a$ and $b$, since otherwise $G[U]$ contains a subgraph homeomorphic from $K_4$, which contradicts the fact that $G[U]$ is outerplanar.
The goal of this section is to compute the partition $\{A,B\}$.

Let $(z_1,\dots,z_\lambda)$ be the neighbors of $x$ in $U$ in order and let $(t_1,\dots,t_\mu)$ be the neighbors of $y$ in $U$ in order. These orders can be computed by the algorithm in Section~\ref{sec:Ordering of neighbors} using $O(\Delta\cdot|U|)$ queries. In any boundary cycle of $G[U]$, $(z_1,\dots,z_\lambda)$ is in the same order with either $(t_1,\dots,t_\mu)$ or $(t_\mu,\dots,t_1)$.
It is not hard to distinguish these two cases using $O(\Delta\cdot|U|)$ queries. In the following, we assume the first case holds without loss of generality. Let $i\in [1,\lambda]$ and $j\in[1,\mu]$ be such that $y=z_i$ and $x=t_j$. See Figure~\ref{fig:partition by edge}.

\tikzstyle{vertex}=[font=\small]

\begin{figure}
\centering
\scalebox{1}{
\begin{tikzpicture}
\node[vertex] (a1) at (90:2.5) {$x$};
\node[vertex] (a2) at (65:2.5) {$z_1$};
\node[vertex] (a3) at (40:2.5) {};
\node[vertex] (a4) at (15:2.5) {$z_{i-1}$};
\node[vertex] (a5) at (345:2.5) {$t_{j+1}$};
\node[vertex] (a6) at (320:2.5) {};
\node[vertex] (a7) at (295:2.5) {$t_\mu$};
\node[vertex] (a8) at (270:2.5) {$y$};
\node[vertex] (a9) at (245:2.5) {$t_1$};
\node[vertex] (a10) at (220:2.5) {};
\node[vertex] (a11) at (195:2.5) {$t_{j-1}$};
\node[vertex] (a12) at (165:2.5) {$z_{i+1}$};
\node[vertex] (a13) at (140:2.5) {};
\node[vertex] (a14) at (115:2.5) {$z_\lambda$};

\foreach \from/\to in {a2/a3,a3/a4/,a4/a5,a5/a6,a6/a7,a9/a10,a10/a11,a11/a12,a12/a13,a13/a14}
    \path (\from) edge [dashed, bend left=60,looseness=1] (\to);
\foreach \to in {a2,a3,a4,a14,a13,a12}
    \path (a1) edge (\to);
\foreach \to in {a5,a6,a7,a11,a10,a9}
    \path (a8) edge (\to);
\path (a8) edge (a1);

\draw [rounded corners=10,dashed] (a2) -- (60:3.5) -- (70:3.5) -- (a2);
\draw [rounded corners=10,dashed] (a3) -- (35:3.5) -- (45:3.5) -- (a3);
\draw [rounded corners=10,dashed] (a4) -- (10:3.5) -- (20:3.5) -- (a4);
\draw [rounded corners=10,dashed] (a5) -- (340:3.5) -- (350:3.5) -- (a5);
\draw [rounded corners=10,dashed] (a6) -- (315:3.5) -- (325:3.5) -- (a6);
\draw [rounded corners=10,dashed] (a7) -- (290:3.5) -- (300:3.5) -- (a7);

\draw [rounded corners=10,dashed] (a9) -- (240:3.5) -- (250:3.5) -- (a9);
\draw [rounded corners=10,dashed] (a10) -- (215:3.5) -- (225:3.5) -- (a10);
\draw [rounded corners=10,dashed] (a11) -- (190:3.5) -- (200:3.5) -- (a11);
\draw [rounded corners=10,dashed] (a12) -- (160:3.5) -- (170:3.5) -- (a12);
\draw [rounded corners=10,dashed] (a13) -- (135:3.5) -- (145:3.5) -- (a13);
\draw [rounded corners=10,dashed] (a14) -- (110:3.5) -- (120:3.5) -- (a14);
\end{tikzpicture}
}
\caption{Partition by the edge $(x,y)$}
\label{fig:partition by edge}
\end{figure}
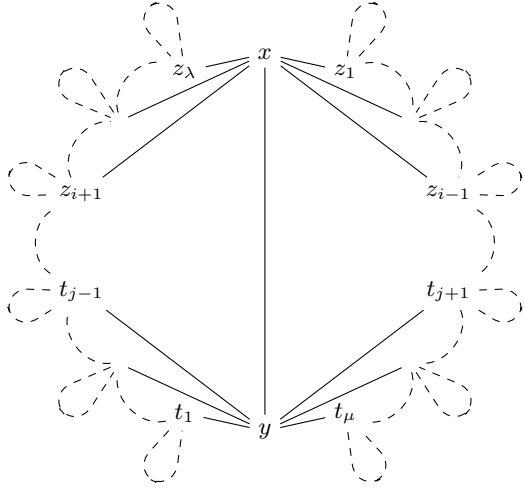

For any node $u\in U\backslash\{x,y\}$, the following algorithm computes whether $u$ is to the left or to the right with respect to the edge $(x,y)$ using a constant number of queries. Thus we obtain the partition $\{A,B\}$ using $O(\Delta\cdot|U|)$ queries.

\vspace{3mm}
\begin{algorithm}{Left-or-Right}{x,y,u}
\CALL{Query}(x,u)\\
\CALL{Query}(y,u)\\
\begin{IF}{\delta(x,u)\leq \delta(y,u)}
    \begin{FOR}{k\=1\TO\lambda}
        \CALL{Query}(z_k,u)
    \end{FOR}\\
    \text{Let $i^*$ be such that $\delta(z_{i^*},u)=\min_{1\leq k\leq \lambda}\{\delta(z_k,u)\}$}\\
    \begin{IF}{i^*<i}
        \RETURN \text{Right}
    \ELSE
        \RETURN \text{Left}
    \end{IF}
\ELSE
    \begin{FOR}{k\=1\TO\mu}
        \CALL{Query}(t_k,u)
    \end{FOR}\\
    \text{Let $j^*$ be such that $\delta(t_{j^*},u)=\min_{1\leq k\leq \mu}\{\delta(t_k,u)\}$}\\
    \begin{IF}{j^*<j}
        \RETURN \text{Left}
    \ELSE
        \RETURN \text{Right}
    \end{IF}
\end{IF}
\end{algorithm}

We will show that the algorithm \textsc{Left-or-Right}$(x,y,u)$ returns the correct side of $u$ with respect to $(x,y)$.
First consider the case that $\delta(x,u)\leq \delta(y,u)$.
Let $i^*\in [1,\lambda]$ be such that $z_{i^*}$ is the closest to $u$ among all neighbors of $x$.
The node $z_{i^*}$ is different from $y$ since we assume that $\delta(x,u)\leq \delta(y,u)$.
If $i^*<i$, then $z_{i^*}$ is to the right of $(x,y)$, so is $u$;
if $i^*>i$, then $z_{i^*}$ is to the left of $(x,y)$, so is $u$.
Thus the algorithm above returns the correct side of $u$.
The case that $\delta(x,u)>\delta(y,u)$ is similar.

\subsection{Finding Polygon}
\label{sec:finding polygon}
Let $U$ be a self-contained subgraph of $V$. Let $x,y_i,y_{i+1}$ be nodes in $U$ such that $x$ is a non-cut node in $G[U]$, $y_i$ and $y_{i+1}$ are neighbors of $x$ and they are consecutive (see Section~\ref{sec:partition by node} for the definition of \emph{consecutive neighbors}). There exists some path in $G[U]$ between $y_i$ and $y_{i+1}$ without $x$, since $x$ is a non-cut node in $G[U]$. Let $P$ be such a path of minimum length. Then $L$ is unique, since otherwise $G[U]$ contains a subgraph homeomorphic from $K_{2,3}$, which contradicts with the fact that $G[U]$ is outerplanar. The path $P$ and the edges $(x,y_i)$, $(x,y_{i+1})$ together form a polygon, which is the unique polygon using $(x,y_i)$ and $(x,y_{i+1})$ as edges. The following algorithm computes this polygon using $O(\Delta\cdot|U|\log|U|)$ queries.

\vspace{3mm}
\begin{algorithm}{Find-Polygon}{x,y_i,y_{i+1},U}
A_1\= \text{the subset of $U$ separated by the edge $(x,y_i)$ which contains $y_{i+1}$}\\
A_2\= \text{the subset of $U$ separated by the edge $(x,y_{i+1})$ which contains $y_i$}\\
A\=A_1\cap A_2\\
\CALL{Query}(A,y_i)\\
\CALL{Query}(A,y_{i+1})\\
d\=\min_{u\in A}\{\delta(u,y_i)+\delta(u,y_{i+1})\}\\
\text{Let $z\in A$ be such that $\delta(z,y_i)+\delta(z,y_{i+1})=d$ and $\delta(z,y_i)=\lfloor d/2\rfloor$}\\
P_1\=\CALL{Shortest-Path}(y_i,z)\\
P_2\=\CALL{Shortest-Path}(z,y_{i+1})\\
\RETURN \text{the concatenation of } P_1, P_2, (y_{i+1},x), (x,y_i)
\end{algorithm}

The set $A$ above is the set of nodes between the edges $(x,y_i)$ and $(x,y_{i+1})$.
Let $d$ be the length of the shortest path $P$ between $y_i$ and $y_{i+1}$ that does not go by $x$.
Let $z$ is the node in the middle of $P$.
We then calculate the shortest path $P_1$ between $y_i$ and $z$ and the shortest path $P_2$ between $z$ and $y_{i+1}$ using $O(|U|\log|U|)$ queries (see Section~\ref{sec:shortest path}).
The concatenation of $P_1$ and $P_2$ is $P$. Together with the edges $(y_{i+1},x)$ and $(x,y_i)$, we get the polygon.
\subsection{Partition by Polygon}
\label{sec:partition by polygon}
Let $U$ be a self-contained subset of $V$.
Let $(q_1,\dots,q_l)$ be an arbitrary polygon where every $q_i$ is in $U$.
For every $i\in[1,l]$, let $\{S_{q_i,j}\}_{1\leq j\leq n_{q_i}}$ be the partition of $U$ by the node $q_i$. Exactly one of these $n_{q_i}$ subsets contains all nodes of the polygon. Let it be $S_{q_i,m_i}$ for some $m_i\in [1,n_{q_i}]$ and define $W_i=U\backslash S_{q_i,m_i}$ (see Figure~\ref{fig:partition by polygon}). Let $R=\bigcap_{1\leq i\leq l} S_{q_i,m_i}$. Since $R$ is self-contained, $R$ can be partitioned into $l$ subsets $R_1,\dots,R_l$ by the $l$ edges of the polygon. The goal of this section is to obtain the partition: $\{W_1,\dots,W_l,R_1,\dots,R_l\}$.
Section~\ref{sec:partition by node} and Section~\ref{sec:partition by edge} give algorithms to compute any $W_i$ and any $R_i$ using $O(\Delta\cdot|U|)$ queries.
So a naive algorithm to compute the partition has query complexity $O(l\cdot\Delta\cdot|U|)$, which is $O(\Delta\cdot|U|^2)$ when $l=\Theta(|U|)$.
Next we will give an improved implementation that uses $O(\Delta\cdot|U|\log |U|)$ queries based on dichotomy.

First we compute the eight subsets $W_1$, $R_1$, $W_{\lfloor l/2\rfloor}$, $R_{\lfloor l/2\rfloor}$, $W_{\lfloor l/2\rfloor+1}$, $R_{\lfloor l/2\rfloor+1}$, $W_l$, $R_l$. For every node $v\in U$ outside the eight subsets, it belongs to one of the two subsets $Z_1=\bigcup_{2\leq i\leq \lfloor l/2\rfloor-1} (W_i\cup R_i)$ and $Z_2=\bigcup_{\lfloor l/2\rfloor+2\leq i\leq l-1} (W_i\cup R_i)$. It is easy to see that both subsets are self-contained. In addition, we can decide if $v$ belongs to $Z_1$ or $Z_2$ by making two queries: \textsc{Query}$(v,q_2)$ and \textsc{Query}$(v,q_l)$. If $\delta(v,q_2)<\delta(v,q_l)$, then $v$ is in $Z_1$, otherwise $v$ is in $Z_2$.
Thus we obtain the sets $Z_1$ and $Z_2$ using $O(|U|)$ queries.

The following algorithm \textsc{Partition-Segment}$(s,t,Z)$ receives two integers $s,t\in [1,l]$ and a self-contained subset $Z\subseteq U$, such that $Z=\bigcup_{s\leq i\leq t}(W_i\cup R_i)$, and returns a partition $\{W_s,R_s,\dots,W_t,R_t\}$ of $Z$.

\vspace{3mm}
\begin{algorithm}{Partition-Segment}{s,t,Z}
\begin{IF}{s>t}
\RETURN \emptyset
\end{IF}\\
m=\lfloor(s+t)/2\rfloor\\
\text{Compute} W_m,R_m\\
A\=(Z\backslash(W_m\cup R_m))\cup \{q_m,q_{m+1}\}\\
\CALL{Query}(A,q_m)\\
\CALL{Query}(A,q_{m+1})\\
A_1\=\{u\in A\mid q_m<q_{m+1}\}\\
A_2\=\{u\in A\mid q_m>q_{m+1}\}\\
Q_1\=\CALL{Partition-Segment}(s,m-1,A_1)\\
Q_2\=\CALL{Partition-Segment}(m+1,t,A_2)\\
\RETURN Q_1\cup Q_2\cup \{W_m,R_m\}
\end{algorithm}

The number of queries to compute $W_m$ and $R_m$ is $O(\Delta\cdot|Z|)$. During the recursion, every time  $(t-s)$ is reduced to a half, so there are at most $\log l\leq \log |U|$ levels of recursion. At every level, the query complexity is $O(\Delta\cdot|U|)$, since the sets $Z$ in the same level are all disjoint. So the total query complexity of this algorithm is $O(\Delta\cdot|U|\log |U|)$.

Let $Q_1$ (resp. $Q_2$) be the partition of $Z_1$ (resp. $Z_2$) returned by the algorithm above. Then $Q_1\cup Q_2$ together with the eight sets computed at the beginning give the partition $\{W_1,R_1,\dots,W_l,R_l\}$.

\section{Missing proofs for Balanced-Partition Algorithm}\label{appendix:proofs}
\subsection{Proof of Lemma~\ref{lemma:self-contained} }\label{appendix:self-contained}

We only need to show that every time when the algorithm  fails to provide a $\beta$-balanced partition for a sampling (by executing \emph{go to Step 1}) in Step 3, 4, 7 or 8, there must be $\beta$-bad set.

In Step 3, this happens when $x$ has exactly one neighbor in $D$. Thus $D\backslash\{x\}$ is a $\beta$-bad set.
In Step 4, this happens when $|V_i|>\beta|U|$ for some $i\in[1,\lambda]$. Such $V_i$ is a $\beta$-bad set since $x\notin V_i$ for every $i\in[1,\lambda]$.
In Step 7, this happens when $|W_i|>\beta|U|$ for some $i\in[1,l]$. Such $W_i$ cannot be $W_1$, because $W_1$ is the same as $V_0$, which has size $|U|-|D|+1< |U|-\beta|U|+1<\beta|U|$, since $\beta\in (0.7,1)$ and $|U|\geq 10$. Notice that $x\notin W_i$ for every $i\in[2,l]$, so any $W_i$ with $|W_i|>\beta|U|$ is a $\beta$-bad set.
In Step 8, this happens when $|R_i|>\beta|U|$ for some $i\in[1,l]$. Such $R_i$ cannot be $R_1$ or $R_l$. In fact, $|R_1|+|R_l|\leq |U|-|T_j|+4< |U|-\beta |U|+4\leq \beta|U|$ for $\beta\in(0.7,1)$ and $|U|\geq 10$. Since $x\notin R_i$ for every $i\in[2,l-1]$, any $R_i$ with $|R_i|>\beta|U|$ is a $\beta$-bad set.

\subsection{Proof of Lemma~\ref{most popular node} }\label{appendix:most-popular-node}

We first prove the following lemma:
\begin{lemma}
\label{tree-separator}
In any tree of bounded degree $\Delta$ which is not a singleton, there is an edge that separates the tree into two parts, such that both parts contain at least $\frac{1}{2\Delta}$ fraction of nodes.
\end{lemma}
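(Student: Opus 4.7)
The plan is to invoke the classical centroid of a tree and then pick the edge leading into the largest subtree hanging off of that centroid.

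First I would recall (or briefly argue) the centroid theorem: in every tree $T$ on $n$ vertices there exists a vertex $v$ such that each connected component of $T-v$ has at most $n/2$ vertices. This is standard and can be proved in one paragraph by starting from any vertex and repeatedly moving towards the heaviest incident subtree until no such move reduces the largest component size.

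Let $v$ be such a centroid and let the components of $T-v$ have sizes $s_1\le s_2\le\cdots\le s_d$, where $d\le\Delta$ is the degree of $v$ (and $d\ge 1$ because the tree is not a singleton). These satisfy $s_1+\cdots+s_d=n-1$ and, by the centroid property, $s_d\le n/2$. By averaging, the largest subtree satisfies
\[
s_d\ \ge\ \frac{n-1}{d}\ \ge\ \frac{n-1}{\Delta}\ \ge\ \frac{n}{2\Delta},
\]
where the last inequality uses $n\ge 2$.

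Now consider the edge $(v,w)$ where $w$ is the neighbor of $v$ lying inside the largest subtree. Removing it splits $T$ into two parts of sizes $s_d$ and $n-s_d$. The bound above gives $s_d\ge n/(2\Delta)$, while the centroid property gives $n-s_d\ge n/2\ge n/(2\Delta)$, so both sides contain at least a $1/(2\Delta)$ fraction of the nodes, as required. There is no real obstacle here; the only point worth double-checking is the tiny edge case $n=2$, where the unique edge trivially splits the tree into two singletons, each of size $n/2\ge n/(2\Delta)$.
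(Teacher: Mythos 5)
Your proof is correct. It takes a genuinely different route from the paper's: the paper runs an extremal argument directly on edges, choosing the edge $(u^*,v^*)$ that maximizes $\min(|A_{u,v}|,|B_{u,v}|)$ and then showing, by comparing with the edges $(w_i,u^*)$ incident to the heavier endpoint, that each of the $\le\Delta-1$ other subtrees hanging off $u^*$ has at most $|B_{u^*,v^*}|$ vertices, which forces $|B_{u^*,v^*}|\ge (n-1)/\Delta\ge n/(2\Delta)$. You instead invoke the classical centroid vertex, use the degree bound to average the $\le\Delta$ component sizes and conclude the largest component has size at least $(n-1)/\Delta\ge n/(2\Delta)$, and then cut the edge into that component; the centroid property $s_d\le n/2$ handles the other side. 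The two arguments land on exactly the same quantitative bound and are close cousins (the centroid is itself found by a local-extremum argument very much like the paper's), but yours is more modular: it delegates the delicate part to a standard, citable lemma and reduces the rest to one averaging step, whereas the paper's version is self-contained at the cost of the slightly fiddlier bookkeeping with $A_{w_i,u^*}$ and $B_{w_i,u^*}$. Your handling of the $n=2$ edge case and of the inequality $(n-1)/\Delta\ge n/(2\Delta)$ for $n\ge 2$ is also correct.
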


\begin{proof}
Let $T$ be a degree bounded tree of size $n$ (for any $n\geq 2$).
For any edge $(u,v)$ in $T$, the removal of this edge would separate $T$ into two subtrees. Let $A_{u,v}$ (resp. $B_{u,v}$) be the set of nodes in the subtree containing $u$ (resp. $v$).
Let $(u^*,v^*)$ be the edge which maximizes $\min(|A_{u,v}|,|B_{u,v}|)$. We will show that both $|A_{u^*,v^*}|$ and $|B_{u^*,v^*}|$ are at least $\frac{1}{2\delta}\cdot n$.

Without loss of generality, we assume that $|A_{u^*,v^*}|\geq |B_{u^*,v^*}|$. Let $w_1,\dots,w_\lambda$ be neighbors of $u^*$ which are different from $v^*$ (where $\lambda\leq \Delta-1$).
Then $|B_{w_i,u^*}|> |B_{u^*,v^*}|$ for every $i\in [1,\lambda]$, since $B_{u^*,v^*}$ is a strict subset of every $B_{w_i,u^*}$.
On the other hand, $\min(|A_{w_i,u^*}|,|B_{w_i,u^*}|)\leq \min(|A_{u^*,v^*}|,|B_{u^*,v^*}|)=|B_{u^*,v^*}|$, where the inequality is from the definition of $(u^*,v^*)$. So $|A_{w_i,u^*}|\leq |B_{u^*,v^*}|$ for every $i\in [1,\lambda]$. Since $A_{u^*,v^*}=\{u\}\uplus A_{w_1,u^*}\uplus\cdots\uplus A_{w_\lambda,u^*}$, we have:
$|A_{u^*,v^*}|=1+(|A_{w_1,u^*}|+\cdots+|A_{w_\lambda,u^*}|)$, which is at most $1+(\Delta-1)|B_{u^*,v
^*}|$.
Since $|A_{u^*,v^*}|+|B_{u^*,v^*}|=n$, we then have $|B_{u^*,v^*}|\geq \frac{n-1}{\Delta} \geq \frac{n}{2\Delta}$ for any $n\geq 2$.
\end{proof}

Let $G=(V,E)$ be any outerplanar graph of bounded degree with $n$ vertices.
When $G$ is a singleton, the result is trivial. So we assume that $n\geq 2$.
When $G$ is a tree, take $z$ to be one of the endpoints of the edge satisfying the condition of Lemma~\ref{tree-separator}. We have $p_z\geq\alpha_0$ for $\alpha_0=\frac{1}{2\Delta}\left(1-\frac{1}{2\Delta}\right)$.

Next we consider the case when $G$ is not a tree.
Take $(q_1,\dots,q_l)$ to be a cycle in $G$ of minimum length. This cycle must form a polygon, since otherwise we can separate it into two smaller cycles. We partition $U$ into $2l$ self-contained subsets: $R_1,\dots,R_l,W_1,\dots,W_l$ (as in Section~\ref{sec:partition by polygon}).
At least one of the four cases holds:

\begin{enumerate}
\item Every subset is of size smaller than $\frac{1}{100}n$;
\item There exists $i\in [1,l]$ such that $|W_i|\in[\frac{1}{100}n, \frac{99}{100}n]$;
\item There exists $j\in [1,l]$ such that $|R_j|\in[\frac{1}{100}n, \frac{99}{100}n]$;
\item Exactly one subset is of size larger than $\frac{99}{100}n$.
\end{enumerate}

In Case 1, since every $(R_i\cup W_i)_{1\leq i\leq l}$ has at most $\frac{1}{50}n$ nodes, there exists $i\in [1,l]$ such that the two subsets $A=\bigcup_{1< j< i} R_j\cup W_j$ and $B=\bigcup_{i< j\leq l} R_j\cup W_j$ both have more than $(\frac{1}{2}-\frac{1}{25})n$ nodes. For every $a\in A$ and every $b\in B$, the shortest path between $a$ and $b$ must go by either $q_1$ or $q_i$. Thus $\max\{p_{q_1},p_{q_i}\}\geq \alpha_1$, where $\alpha_1=(\frac{1}{2}-\frac{1}{25})^2$.

In Case 2, let $i\in [1,l]$ be such that $|W_i|\in[\frac{1}{100}n, \frac{99}{100}n]$.
For every pair $(a,b)\in W_i\times(U\backslash W_i)$, the node $q_i$ is in every shortest path between $a$ and $b$.
So $p_{q_i}\geq \alpha_2$, where $\alpha_2=2\cdot\frac{1}{100}\cdot\frac{99}{100}$.

In Case 3, let $j\in [1,l]$ be such that $|R_j|\in[\frac{1}{100}n, \frac{99}{100}n]$.
For every pair $(a,b)\in R_j\times(U\backslash R_j)$, every shortest path between $a$ and $b$ goes by at least one of $q_j$ and $q_{j+1}$, thus $\max\{p_{q_j},p_{q_{j+1}}\}\geq \alpha_3$, where $\alpha_3=\frac{1}{100}\cdot\frac{99}{100}$.

In Case 4, let $T_0\subset U$ be the set of size larger than $\frac{99}{100}n$. We already know that $T_0$ is self-contained. If $G[T_0]$ is a tree, by  Lemma~\ref{tree-separator}, there exists $z\in T_0$ such that $z$ is in at least $\alpha_0$ fraction of the shortest paths where both endpoints are in $T_0$. Since $|T_0|>\frac{99}{100}n$, $z$ is in at least $\left(\frac{99}{100}\right)^2\cdot\alpha_0$ fraction of the shortest paths where both endpoints are in $U$.
If $G[T_0]$ is a tree, it contains a polygon. If $G[T_0]$ is in Case 1 (resp. Case 2, Case 3), similarly, there exists $z\in T_0$ such that $p_z$ is at least $\left(\frac{99}{100}\right)^2\cdot\alpha_1$ (resp. $\left(\frac{99}{100}\right)^2\cdot\alpha_2$, $\left(\frac{99}{100}\right)^2\cdot\alpha_3$).
We only need to treat the Case 4 of $G[T_0]$. Let $T_1\subset T_0$ be the set of size larger than $\frac{99}{100}n$. We apply the same argument on $G[T_1]$. If $G[T_1]$ is not in Case 4, we are done; otherwise we obtain $T_2$ with $|T_2|>\frac{99}{100}n$, etc. Every $T_{i+1}$ is a strict subset of $T_i$ (for $i\geq 0$). So this procedure stops after a finite number of iterations and finds a node $z$ with $p_z\geq\alpha$ where $\alpha=\left(\frac{99}{100}\right)^2\cdot\min\{\alpha_0,\alpha_1,\alpha_2,\alpha_3\}$.

\subsection{Proof of Lemma~\ref{sample accuracy} }
\label{appendix:sample-accuracy}

Let $z$ be a node in $U$ with $p_z\geq\alpha$ (such node always exists by Lemma~\ref{most popular node}).
We will show that $\Proba{\widehat{p}_y>\widehat{p}_z}<\frac{1}{3|U|}$ for any node $y$ with $p_y\leq \alpha/2$. This is sufficient since we then have  $\Proba{\exists y\in U, \text{ s.t.\ } p_y\leq \alpha/2 \text{ and }\widehat{p}_y>\widehat{p}_z}<\frac{1}{3}$. Thus with probability at least $\frac{2}{3}$, any node $x$ with the largest $\widehat{p}_x$ satisfies $p_x>\alpha/2$.

Let $y$ be any node with $p_y\leq \alpha/2$.
For every $i\in [1,\omega]$, define the variable $Y_i\in \{0,1\}$, such that $Y_i=1$ if the node $y$ is in some shortest path between $a_i$ and $b_i$, and $Y_i=0$ otherwise. Since $\{a_i\}_{1\leq i\leq \omega}$ and $\{b_i\}_{1\leq i\leq \omega}$ are uniform and independent random nodes in $U$, $\{Y_i\}_{1\leq i\leq \omega}$ are independent and identically distributed random variables, and each $Y_i$ equals $1$ with probability $p_y$. We then have $\mathbb{E}[Y_i]=p_y\leq \alpha/2$.

Similarly, define $Z_i\in \{0,1\}$ such that $Z_i=1$ if the node $z$ is in some shortest path between $a_i$ and $b_i$, and $Z_i=0$ otherwise. Then $\{Z_i\}_{1\leq i\leq \omega}$ are independent and identically distributed random variables and each $Z_i$ equals $1$ with probability $p_z$. Thus $\mathbb{E}[Z_i]=p_z\geq\alpha$.

For every $i\in [1,\omega]$, define $T_i=Y_i-Z_i$. Then $\mathbb{E}[T_i]=\mathbb{E}[Y_i]-\mathbb{E}[Z_i]<-\frac{\alpha}{2}$. Let $\overline{T}=\frac{1}{t}(T_1+\cdots T_\omega)$. Since $\{T_i\}_{1\leq i\leq \omega}$ are independent and identically distributed random variables, $\mathbb{E}[\overline{T}]=\mathbb{E}[T_1]$. We have:

\[
\Proba{\widehat{p}_y>\widehat{p}_z}  = \Proba{\sum_{1\leq i\leq \omega}T_i >0}
\leq  \Proba{|\overline{T}-\mathbb{E}[\overline{T}]|>\frac{\alpha}{2}}
\leq 2\cdot\exp(-\frac{\alpha^2\omega}{8})
\]
where the last step holds by Hoeffding's inequality.
Take the constant $C$ to be large enough such that $2\cdot\exp(-\frac{\alpha^2\omega}{8})<\frac{1}{3|U|}$ as soon as $|U|\geq 1$, where $\omega=C\cdot\log |U|$. Then we have $\Proba{\widehat{p}_y>\widehat{p}_z}<\frac{1}{3|U|}$.

\section{Proof of Proposition~\ref{prop:approx-lower-bound}}\label{appendix:approx-lower-bound}

We will define a certain distribution of graphs and show that on that random input, any {deterministic} algorithm for the $f$-approximate metric reconstruction problem requires $\Omega(n^2/f)$ queries on average, for $n$ large enough. By Yao's Minimax Principle~\cite{yao1977probabilistic}, the result follows.

To simplify the proof, let $n=2f k+1$, for $k\in\mathbb{N}$. We take the uniform distribution on the following set of trees, one for each $f$-tuple $(\sigma_1,\dots,\sigma_f)$, where every $\sigma_i$ is a permutation of $S_k$. The tree $T$ has one vertex $a_0$ as the root (on the first level), $k$ vertices $a_1,\dots, a_k$ on the second level, $k$ vertices $a_{k+1},\dots,a_{2k}$ on the third level, $\cdots$, and $k$ vertices $a_{n-k},\dots,a_{n-1}$ on the $(2f+1)^{\rm th}$ level.
For every $l\in [2,f]$ and every $i\in [1,k]$, there is an edge between the $i^{\rm th}$ node on level $l$ and the $i^{\rm th}$ node on level $l+1$.
For every $l\in [f+1,2f]$ and every $i\in [1,k]$, there is an edge between the $i^{\rm th}$ node on level $l$ and the $\sigma_{l-f}(i)^{\rm th}$ node on level $l+1$.
Every tree constructed above has $k$ branches from the root, and every branch is a path of $2f$ nodes.
We will show that any deterministic algorithm requires $\Omega(n^2/f)$ queries on average to compute an $f$-approximation of the metric of $T$, for $n$ large enough.

Let $\mathcal{A}$ be a deterministic algorithm for an $f$-approximation of the metric.
Based on $\mathcal{A}$, we can reconstruct the tree \emph{exactly} as follows:
\begin{enumerate}
\item Execute the algorithm $\mathcal{A}$ to get $\widehat{\delta}$ as the $f$-approximation of the metric;
\item For every $u$ and $v$ on consecutive levels below level $l+1$, there is an edge between $u$ and $v$ iff. $\widehat{\delta}(u,v)<2f$.
\end{enumerate}

In fact, for every two nodes $u$ and $v$ on consecutive levels below level $f+1$, if they are in the same branch with respect to the root, we have $\delta(u,v)=1$, thus $\widehat{\delta}(u,v)\leq f$; and if they are in different branches, we have $\delta(u,v)=\delta(a_0,u)+\delta(a_0,v)\geq 2f$, thus $\widehat{\delta}(u,v)\geq 2f$. So we indeed reconstruct the tree $T$ exactly based on $A$.
In order to prove that $\mathcal{A}$ requires $\Omega(n^2/f)$ queries on average, we only need to prove that any deterministic algorithm for the exact reconstruction problem requires $\Omega(n^2/f)$ queries on average.

Let $\mathcal{A'}$ be a deterministic algorithm that reconstructs exactly the tree $T$. We assume that $\mathcal{A'}$ does not make redundant queries, whose answers can be  deduced before the query. Obviously, any query with the root is redundant. For any two node $u$ and $v$, let $l_u$ and $l_v$ be their levels. The query $(u,v)$ is redundant when $l_u\leq f+1$ and $l_v\leq f+1$, since the first $f+1$ levels in $T$ are fixed. Thus every query $(u,v)$ is such that $l_u>f+1$ and $l_v\geq 2$ (we suppose that $l_u\geq l_v$ without loss of generality). The answer is either $l_u-l_v$, if $u$ and $v$ are in the same branch; or $l_u+l_v-2$, if $u$ and $v$ are in different branches. We can equivalently identify the answer as \emph{Yes} or \emph{No} to the question: \emph{Are $u$ and $v$ in the same branch? }

Next, we will bound the number of \emph{Yes} answers received by $\mathcal{A'}$. To do this, we introduce the \emph{component graph} $H$, which represents the information from all \emph{Yes} answers received by $\mathcal{A'}$. The vertex set of $H$ is defined to be the set of all nodes in $T$ on level at least $f+1$. At the beginning, the edge set of $H$ is empty. Every time when a query $(u,v)$ gives a \emph{Yes} answer, we add an edge to $H$:
\begin{itemize}
\item if $l_u> f+1$ and $l_v\geq f+1$, add the edge $(u,v)$ to $H$;
\item if $l_v> f+1$ and $2\leq l_v< f+1$, let $w$ be the only node on level $f+1$ which is in the same branch of $v$; add the edge $(u,w)$ to $H$;
\end{itemize}

There could not be cycles in $H$, since otherwise there are redundant queries. The number of connected components in $H$ is at least $k$, since every connected component in $H$ contains nodes from the same branch of $T$ and there are $k$ branches in $T$. The number of edges in $H$ is the number of vertices in $H$ minus the number of connected components in $H$, so it is at most $k(f+1)-k=k f$. There is a one-to-one correspondence between the \emph{Yes} answers and the edges in $H$, so $\mathcal{A'}$ stops after at most $kf$ \emph{Yes} answers.

We use a decision tree argument. Consider the decision tree of $\mathcal{A'}$ (see Figure~\ref{decision tree}). $\mathcal{A'}$ first queries some pair $(u_1,v_1)$. If the answer is \emph{Yes} (left subtree in Figure~\ref{decision tree}), it queries some pair $(u_2,v_2)$, otherwise (right subtree in Figure~\ref{decision tree}) it queries some pair $(u_3,v_3)$, etc. Since $\mathcal{A'}$ is deterministic, the sequence $\{ (u_i, v_i)\}_{i\geq 1}$ is fixed in advance.

\tikzstyle{level 1}=[level distance=2cm, sibling distance=8cm]
\tikzstyle{level 2}=[level distance=2cm, sibling distance=4cm]
\tikzstyle{level 3}=[level distance=2cm, sibling distance=2cm]
\tikzstyle{level 3}=[level distance=3cm, sibling distance=1cm]
\tikzstyle{query} = [draw]

\begin{figure}
\centering
\scalebox{0.7}{
\begin{tikzpicture}[minimum size = 1cm]
\tikzset{decorate sep/.style 2 args=
{decorate,decoration={shape backgrounds,shape=circle,shape size=#1,shape sep=#2}}}
\node [query] (a){$\QUERY(u_1,v_1)$}
  child {
    node [query] (b) {$\QUERY(u_2,v_2)$}
    child {
      node [query] (c) {$\QUERY(u_4,v_4)$}
      child {
        node [] (d) {}
        edge from parent [draw=none]
      }
      edge from parent node [left=5pt] {\emph{Yes}}
    }
    child {
      node [query] (f) {$\QUERY(u_5,v_5)$}
      child {
        node (g) {}
        edge from parent [draw=none]
      }
      edge from parent node [right=5pt] {\emph{No}}
    }
    edge from parent node [above=0pt] {\emph{Yes}}
  }
  child {
    node [query] (i) {$\QUERY(u_3,v_3)$}
    child {
      node [query] (j) {$\QUERY(u_6,v_6)$}
      child {
        node (k) {}
        edge from parent [draw=none]
      }
      edge from parent node [left=5pt] {\emph{Yes}}
    }
    child {
      node [query] (m) {$\QUERY(u_7,v_7)$}
      child {
        node  (n) {}
        edge from parent [draw=none]
      }
      edge from parent node [right=5pt] {\emph{No}}
    }
    edge from parent node [above=0pt] {\emph{No}}
  };

  \draw[decorate sep={0.5mm}{4mm},fill] ($(c.south) + (0,-4mm)$) -- (d);
  \draw[decorate sep={0.5mm}{4mm},fill] ($(f.south) + (0,-4mm)$) -- (g);
  \draw[decorate sep={0.5mm}{4mm},fill] ($(j.south) + (0,-4mm)$) -- (k);
  \draw[decorate sep={0.5mm}{4mm},fill] ($(m.south) + (0,-4mm)$) -- (n);
\end{tikzpicture}
}
\vspace{-0.5cm}
\caption{Decision tree of $\mathcal{A'}$}
\label{decision tree}
\end{figure}

$\mathcal{A'}$ stops when it is able to reconstruct the tree $T$, which corresponds to a leaf in the decision tree. There are in total $(k!)^f$ leaves in the decision tree, corresponding to all $f$-tuples of permutations of $S_k$. For every leaf, there are $j\leq k f$ \emph{Yes} answers along the path from the root to this leaf.

Let $h>4k f$ to be defined later. The expected number of queries is
\[E [\# \hbox{queries} ] \geq h \Pr [ \# \hbox{queries}\geq h ] = h \left(1- \frac{\#  (\hbox{ leaves at level }<h ) } {(k!)^f}\right) .\]

A leaf of level $<h$ is identified by its root-leaf path, a word over $\{  \emph{Yes} , \emph{No}\}$ of length $< h$ and with $j\leq k f$ \emph{Yes}'s. Thus, using $k f\leq (1/2) (h/2)$:
\[ \#  (\hbox{ leaves at level }<h )  \leq \displaystyle\sum_{0\leq j \leq k f}\binom{h}{j} \leq 2\cdot \binom{h}{k f}\leq \frac{2h^{k f}}{(k f)!}. \]
Plugging that in, using Stirling's formula, and setting $h=k^2 f/e^2$, we get
$$E [\# \hbox{queries} ] \geq h\left(1- \frac{2}{(2\pi k f)^{\frac{1}{2}}(2\pi k)^{\frac{f}{2}}}\left( \frac{  he^2 } {k^{2} f}\right)^{k f} \right) \geq \frac{k^2 f}{e^2} (1-o(1)).$$
Since $k\cdot f\sim n/2$ and $e^2<10$, for $n$ large enough the rightmost expression is at least $n^2/(40f)=\Omega(n^2/f)$.

\fi

\end{document}